\newcommand{\Rmnum}[1]{\expandafter\@slowromancap\romannumeral #1@}
\newtheorem{theorem}{Theorem}
\newtheorem{proposition}{Proposition}
\newtheorem{corollary}{Corollary}
\newtheorem{definition}{Definition}
\newtheorem{example}{Example}
\newtheorem{discussion}{Discussion}
\newenvironment{proof}{\textit{Proof}:}{\hfill $\blacksquare$\par}
\definecolor{darkgreen}{cmyk}{0.91,0,0.88,0.32}
\definecolor{orange}{cmyk}{0.0,.5,1,0.1}
\definecolor{purple}{cmyk}{0.2,1,0,0.1}
\definecolor{pink}{cmyk}{0,0.4,0.4,0}
\definecolor{gray}{cmyk}{0.2,0.2,0.2,0}
\definecolor{greener}{cmyk}{0.91,0,0.88,0.2}
\title{\LARGE \bf
Non-Blockingness Verification of Bounded Petri Nets Using Basis Reachability Graphs---An Extended Version With Benchmarks
}
\author{Chao~Gu,~\IEEEmembership{Graduate Student Member,~IEEE}
        ~Ziyue Ma,~\IEEEmembership{Member,~IEEE}
        ~Zhiwu Li,~\IEEEmembership{Fellow,~IEEE}
        and~\\Alessandro Giua,~\IEEEmembership{Fellow,~IEEE}
\thanks{C. Gu is with the School of Electro-Mechanical Engineering, Xidian University, Xi'an 710071, China, and also with DIEE, University of Cagliari, Cagliari 09124, Italy
        {\tt\small cgu1992@stu.xidian.edu.cn}}%
\thanks{Z. Ma is with the School of Electro-Mechanical Engineering, Xidian University, Xi'an 710071, China
        {\tt\small maziyue@xidian.edu.cn}}
\thanks{Z. Li is with the School of Electro-Mechanical Engineering, Xidian University, Xi'an 710071, China, and also with the Institute of Systems Engineering, Macau University of Science and Technology, Macau {\tt\small zhwli@xidian.edu.cn}}
\thanks{A. Giua is with DIEE, University of Cagliari, Cagliari 09124, Italy {\tt\small giua@unica.it}}
}
\begin{document}

\IEEEspecialpapernotice{{\rm This article is an extended version of the paper ``C. Gu, Z. Ma, Z. Li and A. Giua. Non-blockingness verification of bounded Petri nets using basis reachability graphs. \textit{IEEE Control Systems Letters}, doi: 10.1109/LCSYS.2021.3087937, 2021'' with benchmarks.}}
%
%
\onecolumn

\maketitle
\thispagestyle{empty}
\pagestyle{empty}

\newpage
\begin{abstract}

In this paper, we study the problem of non-blockingness verification by tapping into the basis reachability graph (BRG).
Non-blockingness is a property that ensures that all pre-specified tasks can be completed, which is a mandatory requirement during the system design stage.
We develop a condition of transition partition of a given net such that the corresponding \emph{conflict-increase BRG} contains sufficient information on verifying non-blockingness of its corresponding Petri net.
Thanks to the compactness of the BRG, our approach possesses practical efficiency since the exhaustive enumeration of the state space can be avoided.
In particular, our method does not require that the net is deadlock-free.

\end{abstract}
\begin{IEEEkeywords}
Petri nets, Non-blockingness, Basis reachability graph.
\end{IEEEkeywords}

\newpage
\section{Introduction}
Discrete event systems (DESs) \cite{SL, lafortune2019discrete} are \textit{event-driven} systems whose state space can be described as a discrete set.
As a mathematical characterization for studying, modelling, and analyzing DES, Petri nets\cite{cabasino2016marking, lefebvre2017near} offer various vantages over automata.
For instance, states in Petri nets can be represented as vectors, namely \textit{marking}s; ergo, techniques such as linear algebra\cite{basile2018algebraic} can be applied.
On the other hand, structural-based approaches can be adopted to avert exhaustively enumerating the state space, therefore mitigating the \textit{state explosion} problem.

In DESs, \textit{non-blockingness}\cite{yin2018synthesis} is a property that ensures that all pre-specified tasks can be completed, which is a mandatory requirement during the system design stage.
Given its importance, efficient techniques are desired to verify if a given system is non-blocking.
Past works \cite{c1, uzam2002optimal, ghaffari2003design} propose several methods to verify the non-blockingness of a given Petri net: these methods are based on the reachability graph (RG) and \emph{theory of regions}.
On the other hand, \cite{zhao2013iterative, hu2015maximally} propose methods to synthesize non-blocking enforcing supervisors in some subclasses of Petri nets.

Recently, a \textit{semi-structural} analysis technique in Petri nets, called the \emph{basis reachability analysis}, was proposed \cite{c8}.
In the basis reachability analysis, only a subset of the reachable markings called \emph{basis markings} is enumerated and an automaton-like structure called \emph{basis reachability graph (BRG)} is constructed.
Initially created for \emph{diagnosis} problems \cite{c8}, basis reachability analysis has been gradually developed and adopted on solving other issues such as \emph{marking reachability} and \emph{opacity} problems, etc.

Although the BRG-based techniques have been proved to be operative and efficient, it is showed that a conventional BRG is, in general, not applicable to tackle the non-blockingness verification problem \cite{gu2020verification}.
The reason is that in a BRG there may exist some livelocks among a set of non-basis markings.
In such a case, the blocking behavior of the plant net cannot be detected by inspecting the structure of the BRG.
To overcome such a problem, in \cite{gu2020verification}, an augmented version of BRGs called \textit{minimax-BRG} is developed.
Although the minimax-BRG exhibits practical efficiency in solving the non-blockingness verification problem, unlike the basis-marking-based approach\cite{ru2008supervisor}, currently, there are no analysis methods for addressing with minimax-BRGs on state estimation and supervisory control problems, in which non-blockingness analysis plays a key role.
This motivated us to develop an alternative non-blocking verification method based on the conventional types of BRGs whose analysis methods are relatively mature.

In this paper we introduce a particular type of BRGs called \textit{conflict-increase BRGs} (CI-BRGs) that encode sufficient non-blockingness-related information by referring to a particular partition of the transition set.
Differently from the minimax-BRG\cite{gu2020verification} whose construction has a higher complexity than that of a BRG with the same transition partition, a CI-BRG is identical in essence with BRG.
We characterize the main properties of basis markings in CI-BRGs and prove that the non-blockingness of a system can be verified by checking if all basis markings in its corresponding CI-BRG are non-blocking.
Although there exist restrictions on obtaining of CI-BRGs, which depend on the system structure and the parameters of the linear constraint that describes the final markings set, thanks to the compactness of BRGs, our approach still achieves practical efficiency compared with the RG-based analysis, according to numerical results.


\section{Preliminaries}
%

\subsection{Petri nets}
A Petri net is a four-tuple $N=(P,T,Pre,Post)$, where $P$ is a set of $m$ \textit{places} and $T$ is a set of $n$ \textit{transitions}.
$Pre: P\times T\rightarrow \mathbb{N}$ and $Post: P\times T\rightarrow \mathbb{N}$ ($\mathbb{N}=\{0, 1, 2, \cdots\}$) are the \textit{pre}- and \textit{post}- \textit{incidence functions} that specify the \textit{arcs} in the net and are represented as matrices in $\mathbb{N}^{m\times n}$.
The \emph{incidence matrix} of $N$ is defined by $C=Post-Pre$.
A Petri net is \textit{acyclic} if there are no directed cycles in its underlying digraph.

Given a Petri net $N=(P,T,Pre,Post)$ and a set of transitions $T_x\subseteq T$, the \textit{$T_x$-induced sub-net} of $N$ is a net resulting by removing all transitions in $T\setminus T_x$ and corresponding arcs from $N$, denoted as $N_x=(P,T_x,Pre_x,Post_x)$ where $T_x\subseteq T$ and $Pre_x$ ($Post_x$) is the restriction of $Pre$ ($Post$) to $P$ and $T_x$.
The incidence matrix of $N_x$ is denoted by $C_x = Post_x-Pre_x$.

A \emph{marking} $M$ of a Petri net $N$ is a mapping: $P\to\mathbb{N}$ that assigns to each place of a Petri net a non-negative integer number of \textit{tokens}. The number of tokens in a place $p$ at a marking $M$ is denote by $M(p)$.
A Petri net $N$ with an initial marking $M_0$ is called a \textit{marked net}, denoted by $\langle N, M_0\rangle$.
For a place $p\in P$, the \textit{set of its input transitions} is defined by $^{\bullet}p=\{t\in T\mid Post(p,t)>0\}$ and the \textit{set of its output transitions} is defined by $p^{\bullet}=\{t\in T\mid Pre(p,t)>0\}$. The notions for $^{\bullet}t$ and $t^{\bullet}$ are analogously defined.
A Petri net $N=(P, T, Pre, Post)$ is \textit{conflict-free} if for all $p\in P$, $|p^{\bullet}|\leq 1$.

A transition $t\in T$ is \emph{enabled} at a marking $M$ if $M\geq Pre(\cdot, t)$, denoted by $M[t\rangle$; otherwise it is said to be \textit{disabled} at $M$, denoted as $\neg M[t\rangle$.
If $t$ is enabled at $M$, the \emph{firing} of $t$ yields marking $M^{\prime}=M+C(\cdot, t)$, which is denoted as $M[t\rangle M^{\prime}$.
A marking $M$ is \textit{dead} if for all $t\in T$, $M\ngeqslant Pre(\cdot, t)$.

Marking $M^{\prime}$ is \emph{reachable} from $M_{1}$ if there exist a firing sequence of transitions $\sigma=t_{1}t_{2}\cdots t_{n}$ and markings $M_{2},\cdots, M_{n}$ such that $M_{1}[t_{1}\rangle M_{2}[t_{2}\rangle\cdots M_{n}[t_{n}\rangle M^{\prime}$ holds.
We denote by $T^*$ the set of all finite sequences of transitions over $T$.
Given a transition sequence $\sigma\in T^{*}$, $\varphi: T^{*}\rightarrow \mathbb{N}^{n}$ is a function that associates to $\sigma$ a vector $\textbf{y}=\varphi(\sigma)\in \mathbb{N}^{n}$, called the \textit{firing vector} of $\sigma$.
Let $\varphi^{-1}: \mathbb{N}^{n}\rightarrow T^{*}$ be the inverse function of $\varphi$, namely for $\textbf{y}\in \mathbb{N}^{n}$, $\varphi^{-1}(\textbf{y}):=\{\sigma\in T^{*}| \varphi(\sigma)=\textbf{y}\}$.
The set of markings reachable from $M_{0}$ is called the \emph{reachability set} of $\langle N, M_0\rangle$, denoted by $R(N, M_{0})$.
A marked net $\langle N, M_0\rangle$ is said to be \emph{bounded} if there exists an integer $k\in \mathbb{N}$ such that for all $M\in R(N, M_0)$ and for all $p\in P$, $M(p)\leq k$ holds; otherwise it is said to be \textit{unbounded}.

\begin{proposition}$\!\!\!\!${\rm\cite{Murata}}\label{ProX}
Given a marked net $\langle N, M_0\rangle$ where $N$ is acyclic, $M\in R(N, M_0)$, $M^{\prime}\in R(N, M_0)$ and a firing vector $\textbf{y}\in \mathbb{N}^n$, the following holds:
\begin{center}
$\ \ \ \ \ \ M^{\prime}=M+C\cdot \textbf{y}\geq \textbf{0}\Leftrightarrow (\exists \sigma\in \varphi^{-1}(\textbf{y}))\ M[\sigma\rangle M^{\prime}.\hfill\square$
\end{center}
\end{proposition}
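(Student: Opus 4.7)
The plan is to treat the two directions separately. The ($\Leftarrow$) direction is immediate from the standard marking equation applied step by step along $\sigma$: each firing updates the marking by the corresponding column of $C$, so after the entire sequence the resulting marking is $M+C\cdot\varphi(\sigma)=M+C\cdot\textbf{y}$, and being a reachable marking it is non-negative. The substance of the proposition lies in the ($\Rightarrow$) direction, which requires constructing an actual firing sequence from a non-negative solution of the state equation. Acyclicity of $N$ is essential here; in cyclic nets the state equation can admit ``spurious'' non-negative solutions that correspond to no executable sequence.

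I would attack ($\Rightarrow$) by induction on the one-norm $\|\textbf{y}\|_1=\sum_t y_t$. The base case $\|\textbf{y}\|_1=0$ is trivial with $\sigma=\varepsilon$. For the inductive step, the key task is to identify at least one transition $t^\ast$ in the support $T_{\textbf{y}}=\{t\in T\mid y_t>0\}$ that is enabled at $M$. Once such a $t^\ast$ is found, I fire it, obtaining $M_1=M+C(\cdot,t^\ast)\geq\textbf{0}$, and set $\textbf{y}'=\textbf{y}-\textbf{e}_{t^\ast}\in\mathbb{N}^n$. Then $M_1+C\cdot\textbf{y}'=M+C\cdot\textbf{y}=M'\geq\textbf{0}$ with $\|\textbf{y}'\|_1=\|\textbf{y}\|_1-1$, so the inductive hypothesis yields a sequence $\sigma'$ with $\varphi(\sigma')=\textbf{y}'$ and $M_1[\sigma'\rangle M'$; concatenation produces $M[t^\ast\sigma'\rangle M'$ with firing vector $\textbf{y}$, as required.

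The main obstacle is the \emph{enabling lemma}: in an acyclic net, if $M+C\cdot\textbf{y}\geq\textbf{0}$ and $\textbf{y}\neq\textbf{0}$, then some $t^\ast\in T_{\textbf{y}}$ satisfies $M\geq Pre(\cdot,t^\ast)$. I would prove it by contradiction through a source-transition argument on the $T_{\textbf{y}}$-induced sub-net. Define a directed graph on $T_{\textbf{y}}$ by placing an edge $t\to t'$ whenever there exists a place $p$ with $Post(p,t)>0$ and $Pre(p,t')>0$; acyclicity of the sub-net transfers to this graph, so it admits a source $t^\ast$ with no incoming edges. Then for every $p\in{}^{\bullet}t^\ast$, no transition of $T_{\textbf{y}}$ deposits tokens into $p$, whence $(C\cdot\textbf{y})(p)=\sum_{t\in T_{\textbf{y}}}(Post(p,t)-Pre(p,t))y_t\leq -Pre(p,t^\ast)y_{t^\ast}$. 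The non-negativity constraint $M(p)+(C\cdot\textbf{y})(p)\geq 0$ then forces $M(p)\geq Pre(p,t^\ast)y_{t^\ast}\geq Pre(p,t^\ast)$, which means $t^\ast$ is enabled at $M$ after all, a contradiction.

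The delicate part of the argument is setting up the auxiliary graph on $T_{\textbf{y}}$ so that the structural acyclicity of $N$ yields a source transition whose input places are guaranteed not to be replenished by any other transition in the support. Both the non-negativity of $\textbf{y}$ and the acyclicity of $N$ are used in a non-trivial way: without acyclicity, a cycle among supporting transitions could circulate tokens internally and satisfy the state equation without any transition being initially enabled, which is precisely why the statement fails in the general case.
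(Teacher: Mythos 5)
Your proof is correct. Note, however, that the paper does not prove this proposition at all: it is stated as a known classical result and attributed to Murata's survey, so there is no in-paper argument to compare against. Your two-direction decomposition --- the trivial ($\Leftarrow$) via the state equation, and the ($\Rightarrow$) via induction on $\|\textbf{y}\|_1$ resting on the enabling lemma that a source transition of the support (in the acyclic precedence order induced by $Post(p,t)>0$, $Pre(p,t')>0$) must be enabled whenever $M+C\cdot\textbf{y}\geq\textbf{0}$ --- is precisely the standard textbook proof of this result, and all the steps (non-negativity of the intermediate marking $M_1$, the bound $(C\cdot\textbf{y})(p)\leq -Pre(p,t^\ast)y_{t^\ast}$ for input places of the source, and the use of $y_{t^\ast}\geq 1$) check out.
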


Let $G=(N, M_0, {\cal F})$ denote a \textit{plant} consisting of a marked net and a finite set of final markings ${\cal F}\subseteq R(N, M_0)$.
Instead of explicitly listing all the elements in ${\cal F}$, as a general form, in this paper, we characterize set ${\cal F}$ as a linear constraints namely \textit{generalized mutual exclusion constraints} (GMECs)\cite{c25}.
A GMEC is a pair $(\textbf{w}, k)$, where $\textbf{w}\in \mathbb{Z}^m$ and $k\in \mathbb{Z}$ ($\mathbb{Z}$ is the set of integers), that defines a set of markings $\mathcal{L}_{(\textbf{w},k)}=\{M\in\mathbb{N}^m| \textbf{w}^T\cdot M \leq k\}.$


\begin{definition}\label{NB}
{\rm A marking $M\in R(N, M_0)$ of a plant $G=(N, M_0, {\cal F})$ is said to be \emph{blocking} if $R(N, M)\cap {\cal F} =\emptyset$; otherwise $M$ is said to be \textit{non-blocking}.
Plant $G$ is \textit{non-blocking} if all reachable markings are non-blocking; otherwise $G$ is \textit{blocking}.$\hfill\diamondsuit$}
\end{definition}
%

\subsection{Basis Marking and Basis Reachability Graph}\label{SectionIIB}

Given a Petri net $N = (P, T, Pre, Post)$, transition set $T$ can be partitioned into $T=T_E\cup T_I$ where the sets $T_E$ and $T_I$ are called the \textit{explicit} transition set and the \textit{implicit} transition set, respectively.
A pair $\pi=(T_E, T_I)$ is called a \textit{basis partition}\cite{c3} of $T$ if the $T_I$-induced subnet is acyclic. We denote $|T_E|=n_E$ and $|T_I|=n_I$.
Let $C_I$ be the incidence matrix of the $T_I$-induced subnet.
Note that in a BRG with basis partition $(T_E, T_I)$, the firing of explicit transitions in $T_E$ is explicitly represented in the BRG, while the firing of implicit transitions in $T_I$ is abstracted.
Note that no physical meaning needs to be associated with implicit transitions: the set $T_I$ can be arbitrarily selected, provided that the $T_I$-induced subnet is acyclic.

\begin{definition}
Given a Petri net $N = (P, T, {\rm Pre}, {\rm Post})$, a basis partition $\pi=(T_E, T_I)$, a marking $M$, and a transition $t\in T_E$, we define
\begin{center}
$\Sigma(M, t)=\{\sigma\in T_{I}^{\ast}| M[\sigma\rangle M^{\prime}, M^{\prime}\geq {\rm Pre} (\cdot, t)\}$
\end{center}
as the set of \emph{explanations} of $t$ at $M$, and
\begin{center}
$Y(M, t)=\{\varphi(\sigma)\in \mathbb{N}^{n_I}| \sigma\in \Sigma(M, t)\}$
\end{center}
as the set of \emph{explanation vectors}; meanwhile we define
\begin{center}
$\Sigma_{{\rm min}}(M, t)=\{\sigma\in \Sigma(M, t)| \nexists \sigma^{\prime}\in \Sigma(M, t): \varphi(\sigma^{\prime})\lneq \varphi(\sigma)\}$
\end{center}
as the set of \emph{minimal explanations} of $t$ at $M$, and

\begin{center}
$Y_{{\rm min}}(M, t)=\{\varphi(\sigma)\in \mathbb{N}^{n_{I}}| \sigma\in \Sigma_{{\rm min}}(M, t)\}$
\end{center}
as the corresponding set of \emph{minimal explanation vectors}.$\hfill\diamondsuit$
\end{definition}


\begin{definition}
Given a bounded marked net $\langle N, M_0\rangle$ with a basis partition $\pi=(T_E, T_I)$, its \emph{basis reachability graph} (BRG) is a deterministic automaton $\mathcal{B}$ output by Algorithm 2 in \cite{c3}. The BRG $\mathcal{B}$ is a quadruple $(\mathcal{M_B}, {\rm Tr}, \Delta, M_0)$, where the state set $\mathcal{M_B}$ is the set of basis markings, the event set ${\rm Tr}$ is the set of pairs $(t, y)\in T_E\times \mathbb{N}^{n_{I}}$, the transition relation $\Delta=\{(M_1, (t, y), M_2)| t\in T_E, y\in Y_{\rm min}(M_1, t), M_2=M_1+C_I\cdot y+C(\cdot, t)\}$, and the initial state is the initial marking $M_0$.$\hfill\diamondsuit$
\end{definition}

The set $Y_{{\rm min}}(M, t)$ and the BRG can be constructed through Algorithms 1 and 2 in \cite{c3}, respectively.
We extend the definition of transition relation to consider sequence of pairs $\sigma\in {\rm Tr}^*$ and write $M_1\xrightarrow{\sigma}M_2$ to denote that from $M_1$ sequence $\sigma$ yields $M_2$.

Note that the upper bound of states in a BRG is the size of the reachability space of a net. However, many BRG-related work \cite{c8,c3} have shown that in practical cases a BRG can be much smaller than the corresponding reachability space, i.e., $|\mathcal{M_{B}}|\ll|R(N, M_0)|$ holds.
Besides, in some cases, a BRG may grow much slower than the reachability space of a net.
For instance, \cite{tong2016verification} shows an example in which the number of states in a Petri net grows cubically (i.e., $O(k^3)$) when the initial marking increases, while the corresponding BRG growth linearly (i.e., $O(k)$).
Therefore, the construction of the BRG achieves practical efficiency.

\begin{definition}
Given a marked net $\langle N, M_0\rangle$, a basis partition $\pi=(T_E, T_I)$, and a basis marking $M_b\in \mathcal{M_B}$, we define $R_I(M_b)$ the \emph{implicit reach} of $M_b$ as:

\begin{center}
$R_I(M_b)=\{M\in \mathbb{N}^m|(\exists \sigma\in T_I^\ast)\; M_b[\sigma\rangle M\}$.
\end{center}

Since the $T_I$-induced subnet is acyclic, we have:

\begin{center}
$R_I(M_b)=\{M\in\mathbb{N}^m| (y_I\in\mathbb{N}^{n_I})\; M=M_b+C_I\cdot y_I\}$.
\end{center}
\end{definition}

\begin{proposition}$\!\!\!\!${\rm\cite{c3}}\label{BRGpro}
Given a marked net $\langle N, M_0\rangle$ with a basis partition $\pi=(T_E, T_I)$, the set of basis markings of the system is $\mathcal{M_{B}}$.
Consider a marking $M\in\mathbb{N}^m$.
$M\in R(N, M_0)$ if and only if there exists a basis marking $M_b\in \mathcal{M_{B}}$ such that $M\in R_I(M_b)$.$\hfill\diamondsuit$
\end{proposition}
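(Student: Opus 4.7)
The plan is to prove the two directions of the equivalence separately. Sufficiency is a routine concatenation argument, while necessity requires an inductive walk along a firing sequence in which each maximal burst of implicit transitions is compressed down to a minimal explanation.

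For sufficiency, suppose $M_b \in \mathcal{M_B}$ and $M \in R_I(M_b)$. By construction of $\mathcal{B}$ there is a path $M_0 \xrightarrow{(t_1,y_1)} \cdots \xrightarrow{(t_k,y_k)} M_b$; each label $(t_i,y_i)$ is realised in $N$ by a firable implicit sequence in $\varphi^{-1}(y_i)\cap T_I^*$---this is where Proposition~\ref{ProX} applied to the acyclic $T_I$-induced subnet is needed---followed by the explicit transition $t_i$. Concatenating these blocks with any implicit witness for $M\in R_I(M_b)$ yields a firing sequence $M_0[\,\cdot\,\rangle M$ in $N$, so $M\in R(N,M_0)$.

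For necessity, pick any sequence with $M_0[\sigma\rangle M$ and write it as $\sigma = w_0\,t_1\,w_1\cdots t_k\,w_k$ with $t_i\in T_E$ and $w_i\in T_I^*$. I would prove by induction on $k$ that some $M_b^{(k)}\in \mathcal{M_B}$ satisfies $M\in R_I(M_b^{(k)})$. The base case $k=0$ is immediate since $M_0\in\mathcal{M_B}$ and $M\in R_I(M_0)$. For the inductive step, let $M'$ be the marking reached just before $t_k$; the hypothesis applied to the prefix yields a basis marking $M_b^{(k-1)}$ and $y'\in\mathbb{N}^{n_I}$ with $M' = M_b^{(k-1)} + C_I y'$, and since $t_k$ is enabled at $M'$ we have $y'\in Y(M_b^{(k-1)},t_k)$.

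The crux, and the main obstacle, is a domination lemma: for every $y'\in Y(M_b,t)$ there exists $y^*\in Y_{\min}(M_b,t)$ with $y^*\leq y'$ componentwise. This follows from well-foundedness of $(\mathbb{N}^{n_I},\leq)$, since the set $\{y\in Y(M_b,t):y\leq y'\}$ is non-empty and finite, hence admits a $\leq$-minimal element $y^*$, which is automatically minimal in all of $Y(M_b,t)$ (any strict decrease below $y^*$ would remain below $y'$ and contradict the choice). Granted this, set $M_b^{(k)} := M_b^{(k-1)} + C_I y^* + C(\cdot,t_k)$; this is a basis marking via the edge $(t_k,y^*)$ in $\mathcal{B}$, and the identity
\[
M' + C(\cdot,t_k) = M_b^{(k)} + C_I(y' - y^*),
\]
together with $y'-y^*\geq 0$ and Proposition~\ref{ProX} for the acyclic $T_I$-induced subnet, places $M' + C(\cdot,t_k)$ in $R_I(M_b^{(k)})$. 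Firing the residual implicit tail $w_k$ preserves this membership, so $M\in R_I(M_b^{(k)})$. Everything else is routine incidence-matrix bookkeeping exploiting acyclicity of the implicit subnet.
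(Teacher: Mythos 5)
Your proof is correct. Note, however, that the paper does not prove Proposition~\ref{BRGpro} at all: it is imported verbatim from \cite{c3}, so there is no in-paper proof to match against. Judged on its own terms, your argument is sound: the sufficiency direction is the standard concatenation of realizable minimal explanations (realizability is in fact already built into the definition of $Y_{\min}$ via $\Sigma_{\min}$, so the appeal to Proposition~\ref{ProX} there is belt-and-braces rather than essential), and the necessity direction correctly inducts on the number of explicit transitions. Your domination lemma is the right key step and is proved cleanly: a $\leq$-minimal element of the finite nonempty set $\{y\in Y(M_b,t):y\leq y'\}$ is automatically minimal in all of $Y(M_b,t)$, hence lies in $Y_{\min}(M_b,t)$. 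Where you differ from the literature (and from the sketch the paper itself gives inside the proof of Theorem~\ref{Theorem1}, which invokes Theorem~3.8 of \cite{c8}) is in how the residual implicit firings are handled: the standard argument explicitly reorders the sequence, repeatedly moving non-minimal implicit transitions past the next explicit transition, whereas you dispense with sequence surgery entirely and recover the residual $y'-y^*$ as a firable implicit sequence via the state equation on the acyclic $T_I$-induced subnet (Proposition~\ref{ProX}). The algebraic route is arguably cleaner and avoids having to justify that the reordering preserves firability; the reordering route has the advantage of producing an explicit witness sequence. Both are valid; no gap.
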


\section{Non-blockingness Verification Using Conflict-Increase BRGs}\label{Main}

Given a plant net, in general there exist several valid basis partitions, each of which leads to a different BRG.
According to Example 1 presented in \cite{Gu}, a BRG constructed with a randomly selected basis partition may not encode all information needed to test if a plant is non-blocking. The reason lies in the fact that in a BRG there may exist some \textit{livelocks} among a set of non-basis markings; thus, the blocking
behavior of the plant net cannot be detected by checking the structure of the BRG.

\subsection{Conflict-Increase BRGs}
In this part, we show how a BRG corresponding to a suitable basis partition may allow one to verify non-blockingness.
For a given plant, we first introduce the notion of \textit{conflict-increase BRGs} (CI-BRGs).

\begin{definition}\label{CP}
Consider a plant $G=(N, M_0, {\cal F})$ with $N=(P, T, Pre, Post)$ and $\cal{F} = \mathcal{L}_{(\textbf{w},{\textit k})}$.
A transition set $T'\subseteq T$ is said to be \emph{non-conflicting} if $T'\subseteq T\setminus T_{conf}$, where
\begin{equation}\label{eq:conf}
T_{conf}=\{t\in T\mid (\exists p\in P) t\in p^\bullet, |p^\bullet|\geq 2\};
\end{equation}
it is said to be \emph{non-increasing} if $T'\subseteq T\setminus T_{inc}$, where
\begin{equation}\label{eq:pos}
T_{inc}=\{t\in T\mid \mathbf{w}^T\cdot C(\cdot, t)> 0\}.
\end{equation}$\hfill\diamondsuit$
\end{definition}
According to Definition \ref{CP}, a transition set is non-conflicting and non-increasing if it does not contain two types of transitions:
\begin{itemize}
  \item [(1)] $T_{conf}$ --- all transitions that are in structural conflicts (depends only on the structure of the net);
  \item [(2)] $T_{inc}$ --- all transitions whose influence on $(\mathbf{w}, k)$ are positive (depends only on the corresponding GMEC), i.e., at a marking $M$, by firing a transition $t\in T_{inc}$, the token count of the obtained marking $M^{\prime} = M+C(\cdot, t)$ will be increased in terms of the corresponding GMEC since $\mathbf{w}^T\cdot C(\cdot, t)> 0$.
\end{itemize}

\begin{definition}\label{TBasis}
Consider a plant $G=(N, M_0, {\cal F})$ with $N=(P, T, Pre, Post)$ and $\cal{F} = \mathcal{L}_{(\textbf{w},{\textit k})}$.
A BRG of $G$ with respect to $\pi = (T_E, T_I)$ is called a \textit{conflict-increase BRG} (CI-BRG) if $T_I$ is non-conflicting and non-increasing.$\hfill\diamondsuit$
\end{definition}


Note that to construct a CI-BRG, the corresponding implicit transition set $T_I$ in $\pi = (T_E, T_I)$ should be
both non-conflicting and non-increasing.
In addition to that, some other transitions (if necessary) may also need to be eliminated in $T_I$ to ensure the acyclicity of the $T_I$-induced subnet of $G$.
Moreover, for a bounded plant, there always exists a CI-BRG (e.g., the BRG with respect to $T_E = T$ and $T_I = \emptyset$).
Since CI-BRG is a particular type of BRG, it can be computed by Algorithm 2 in \cite{c3} and the complexity analysis of the BRG (described in Section \ref{SectionIIB}) also applies to the CI-BRG.

  On the other hand, in the previous works \cite{c3,c8}, it is often preferred to choose a basis partition $\pi$ with a $T_I$ that is maximal (in the sense of set containment) since the size of its corresponding BRG is relatively small.
  Here, a non-conflicting and non-increasing $T_I$ may not be maximal, therefore the corresponding CI-BRG may not be minimal as well.
  However, as a trade-off, we show hereinafter that necessary information regarding non-blockingness will be appropriately encoded in CI-BRGs and the non-blockingness verification procedure can be therefore facilitated.
  Moreover, as shown by simulations in Section \ref{Section5}, a CI-BRG is still significantly smaller than the corresponding reachability graph in size.


\begin{example}\label{Example22}
  \begin{figure}[t]
\includegraphics[width=6cm]{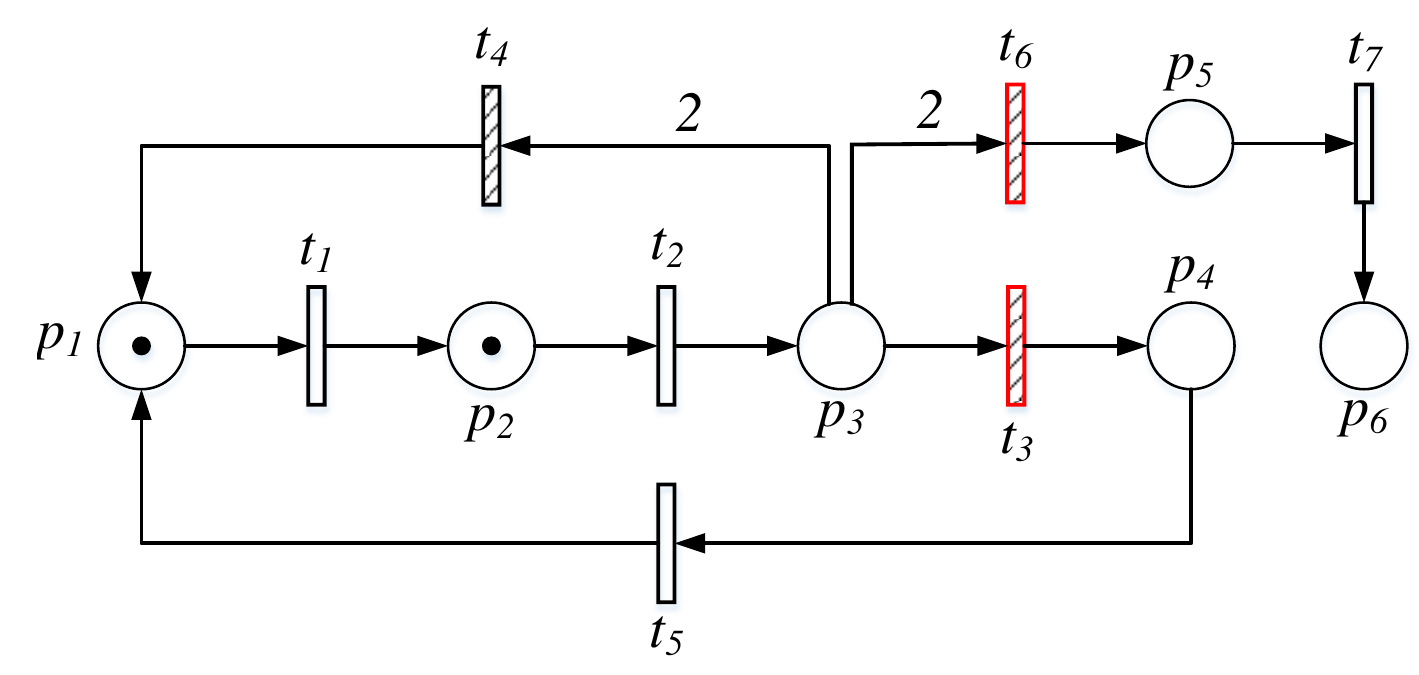}
\caption{A plant $G$ with $\pi = (T_E, T_I)$ where $T_E = \{t_3, t_4, t_6\}$.}\label{plant2}
\end{figure}
\begin{figure}[t]
\includegraphics[width=8cm]{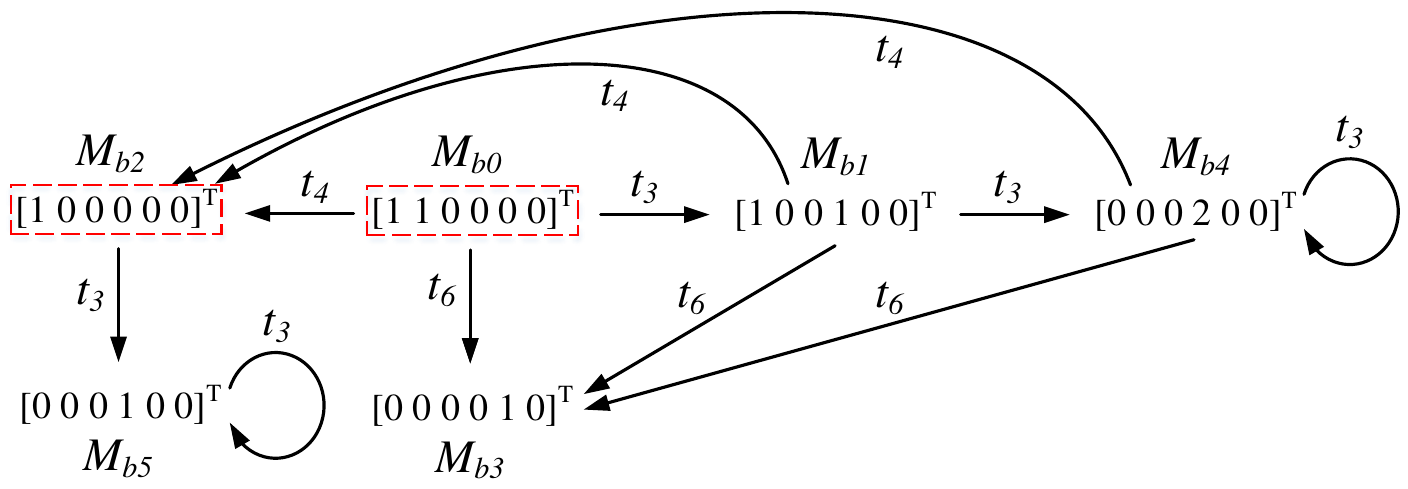}
\caption{The CI-BRG $\cal{B}$. For readability, all explanation vectors are not shown.}\label{CPBRG1}
\end{figure}
Consider a plant $G = (N, M_0, \cal{F})$ in Fig. \ref{plant2} with $M_0 = [1\ 1\ 0\ 0\ 0\ 0]^{T}$ and $\cal{F} = \mathcal{L}_{(\textbf{w},{\textit k})}$ where $\textbf{w} = [0\ 0\ 0\ 1\ 1\ 1]^{T}$ and $k = 0$.
  In this net, $T_{conf} = \{t_3, t_4, t_6\}$ (marked with shadow) and $T_{inc} = \{t_3, t_6\}$ (boxed in red); thus, $T_{conf}\cup T_{inc}=\{t_3, t_4, t_6\}$.
  Notice that the subnet induced by $T\setminus\{t_3, t_4, t_6\}=\{t_1, t_2, t_5, t_7\}$ is acyclic.
  Therefore, we can choose the set of implicit transitions $T_I = \{t_1, t_2, t_5, t_7\}$ that is non-conflicting and non-increasing, which leads to $T_E = \{t_3, t_4, t_6\}$.

  Now we construct the CI-BRG $\mathcal{B} = (\mathcal{M_B}, {\rm Tr}, \Delta, M_0)$.
  According to Algorithm 2 in \cite{c3}, as the initialization, let the basis marking set $\mathcal{M_{B}} = \{M_{b0}\}$ where $M_{b0} = M_0$, the set of pairs ${\rm Tr} = \emptyset$, and the transition relation set $\Delta = \emptyset$.
  Then, at $M_{b0}$, we compute all minimal explanation vectors $\textbf{y}_{min}$ of all explicit transitions (i.e., $t_3, t_4$, and $t_6$) at $M_{b0}$ through Algorithm 1 in \cite{c3} and derive the corresponding basis markings $M_b$:
  \begin{itemize}
    \item for $t_3$, $\textbf{y}_{min} = [0\ 1\ 0\ 0]^T$. Since $M_{b0}+C_I\cdot \textbf{y}_{min} + C (\cdot, t_3) = [ 1\ 0\ 0\ 1\ 0\ 0]^T\notin \mathcal{M_{B}}$, let $M_{b1} = [ 1\ 0\ 0\ 1\ 0\ 0]^T$ and $\mathcal{M_{B}} = \{M_{b0}, M_{b1}\}$. Meanwhile, let ${\rm Tr} = \{(t_3, [0\ 1\ 0\ 0]^T)\}$ and $\Delta = \{(M_{b0}, (t_3, [0\ 1\ 0\ 0]^T), M_{b1})\}$;
    \item for $t_4$, $\textbf{y}^{\prime}_{min} = [1\ 2\ 0\ 0]^T$. Since $M_{b0}+C_I\cdot \textbf{y}^{\prime}_{min} + C (\cdot, t_4) = [ 1\ 0\ 0\ 0\ 0\ 0]^T\notin \mathcal{M_{B}}$, let $M_{b2} = [ 1\ 0\ 0\ 0\ 0\ 0]^T$ and $\mathcal{M_{B}} = \{M_{b0}, M_{b1}, M_{b2}\}$. Meanwhile, let ${\rm Tr} = \{(t_3, [0\ 1\ 0\ 0]^T), (t_4, [1\ 2\ 0\ 0]^T)\}$ and $\Delta = \\ \{(M_{b0}, (t_3, [0\ 1\ 0\ 0]^T), M_{b1}), (M_{b0}, (t_4, [1\ 2\ 0\ 0]^T), M_{b2})\}$;
    \item for $t_6$, $\textbf{y}^{\prime\prime}_{min} = [1\ 2\ 0\ 0]^T$. Since $M_{b0}+C_I\cdot \textbf{y}^{\prime\prime}_{min} + C (\cdot, t_6) = [ 0\ 0\ 0\ 0\ 1\ 0]^T\notin \mathcal{M_{B}}$, let $M_{b3} = [ 0\ 0\ 0\ 0\ 1\ 0]^T$ and $\mathcal{M_{B}} = \{M_{b0}, M_{b1}, M_{b2}, M_{b3}\}$. Meanwhile, let ${\rm Tr} = \{(t_3, [0\ 1\ 0\ 0]^T), (t_4, [1\ 2\ 0\ 0]^T), (t_6, [1\ 2\ 0\ 0]^T)\}$ and $\Delta = \{(M_{b0}, (t_3, [0\ 1\ 0\ 0]^T), M_{b1}), (M_{b0}, (t_4, [1\ 2\ 0\ 0]^T), M_{b2}),\\ (M_{b0}, (t_6, [1\ 2\ 0\ 0]^T), M_{b3})\}$.
  \end{itemize}

  Analogously, we compute all minimal explanation vectors of all explicit transitions at $M_{b1}, M_{b2}$ and $M_{b3}$ and then derive new basis markings and transition relations. Iteratively, it can be computed that:
   \begin{itemize}
     \item $\mathcal{M_{B}} = \{M_{b0}, M_{b1}, M_{b2}, M_{b3}, M_{b4}, M_{b5}\}$ where $M_{b4} = [0\ 0\ 0\ 2\ 0\ 0]^T$ and $M_{b5} = [0\ 0\ 0\ 1\ 0\ 0]^T$;
     \item ${\rm Tr} = \{(t_3, [0\ 1\ 0\ 0]^T), (t_4, [1\ 2\ 0\ 0]^T), (t_6, [1\ 2\ 0\ 0]^T), \\(t_3, [1\ 1\ 0\ 0]^T), (t_3, [1\ 1\ 1\ 0]^T), (t_4, [2\ 2\ 1\ 0]^T),\\ (t_6, [2\ 2\ 1\ 0]^T), (t_4, [2\ 2\ 2\ 0]^T), (t_6, [2\ 2\ 2\ 0]^T)\}$;
     \item $\Delta = \{(M_{b0}, (t_3, [0\ 1\ 0\ 0]^T), M_{b1}),\\ (M_{b0}, (t_4, [1\ 2\ 0\ 0]^T), M_{b2}), (M_{b0}, (t_6, [1\ 2\ 0\ 0]^T), M_{b3}),\\ (M_{b1}, (t_3, [1\ 1\ 0\ 0]^T), M_{b4}), (M_{b1}, (t_4, [2\ 2\ 1\ 0]^T), M_{b2}),\\ (M_{b1}, (t_6, [2\ 2\ 1\ 0]^T), M_{b3}), (M_{b2}, (t_3, [1\ 1\ 0\ 0]^T), M_{b5}),\\ (M_{b4}, (t_3, [1\ 1\ 1\ 0]^T), M_{b4}), (M_{b4}, (t_4, [2\ 2\ 2\ 0]^T), M_{b2}),\\ (M_{b4}, (t_6, [2\ 2\ 2\ 0]^T), M_{b3}), (M_{b5}, (t_3, [1\ 1\ 1\ 0]^T), M_{b5})\}$.
   \end{itemize}
%

  In summary, the CI-BRG $\mathcal{B}= (\mathcal{M_B}, {\rm Tr}, \Delta, M_0)$ is graphically presented in Fig. \ref{CPBRG1}.
  Meanwhile, the basis markings that are also final (i.e., $M_{b0}$ and $M_{b2}$) in $\mathcal{B}$ are boxed with red dashed lines.$\hfill\diamondsuit$
 \end{example}

\subsection{Properties of CI-BRGs}
In this subsection we prove a series of results on the properties of CI-BRGs.
These results will be eventually used to establish our non-blocking verification algorithm.
Before entering the mathematical details, we note that here we do \emph{not} assume that the net is deadlock-free: the results presented in this subsection hold for both deadlock-free nets and nets with deadlocks.
The cases of non-deadlock-free nets will be further discussed in the next subsection.

First, the following proposition shows that for any basis marking $M_b$, if an explicit transition $t$ is enabled by firing a minimal explanation from $M_b$, then $t$ remains enabled when any other implicit transitions fires.

\begin{proposition}\label{PropOne}
Given a plant $G=(N, M_0, {\cal F})$ with $\cal{F} = \mathcal{L}_{(\textbf{w},{\textit k})}$, let $\cal{B}$ be its CI-BRG with respect to $\pi= (T_E, T_I)$.
Let $\mathbf{y}_{min}\in Y_{min}(M_b, t)$ be a minimal explanation vector of explicit transition $t$ at $M_b$. For any implicit firing vector $\mathbf{y}\geq\mathbf{y}_{min}$, $M_b+C_I\cdot\mathbf{y}=M\geq\mathbf{0}$ implies $M[t\rangle$.
\end{proposition}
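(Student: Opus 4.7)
The plan is to reduce the claim to a place-by-place check of $M \geq Pre(\cdot, t)$, exploiting the non-conflicting part of Definition~\ref{CP} to handle exactly the places that matter, namely the input places of $t$. Since $\mathbf{y}_{min}\in Y_{min}(M_b,t)$, by definition there exists $\sigma\in\Sigma_{min}(M_b,t)\subseteq T_I^*$ with $\varphi(\sigma)=\mathbf{y}_{min}$, and its firing leads to a marking $M':=M_b+C_I\cdot\mathbf{y}_{min}\geq Pre(\cdot,t)$. I want to compare $M$ with $M'$ via the identity $M-M'=C_I\cdot(\mathbf{y}-\mathbf{y}_{min})$, so it suffices to argue that the restriction of this difference to the input places of $t$ is componentwise non-negative.

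The key structural step is to show that for every $p\in{}^\bullet t$ and every $t'\in T_I$, $Pre(p,t')=0$. Here is where non-conflictingness enters: if some $t'\in T_I$ had $p$ as an input place, then $t'\in p^\bullet$; but also $t\in p^\bullet$, so $|p^\bullet|\geq 2$, which would place $t'$ in $T_{conf}$. This contradicts $T_I\cap T_{conf}=\emptyset$, unless $t'=t$, which is ruled out since $T_I\cap T_E=\emptyset$. Consequently, for any $p\in{}^\bullet t$, the row $C_I(p,\cdot)$ consists only of entries $Post(p,t')\geq 0$, so it is non-negative.

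Combined with $\mathbf{y}-\mathbf{y}_{min}\geq\mathbf{0}$, this gives $C_I(p,\cdot)\cdot(\mathbf{y}-\mathbf{y}_{min})\geq 0$ for every $p\in{}^\bullet t$, hence $M(p)\geq M'(p)\geq Pre(p,t)$. For places $p\notin{}^\bullet t$ the inequality $M(p)\geq Pre(p,t)=0$ is immediate from the hypothesis $M\geq\mathbf{0}$. Therefore $M\geq Pre(\cdot,t)$, i.e.\ $M[t\rangle$, as desired.

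I do not expect any serious obstacle: the non-increasing half of Definition~\ref{CP} and the acyclicity of the $T_I$-induced subnet play no role in this particular statement; the whole argument rests on translating ``$T_I$ non-conflicting'' into the clean local property ``no implicit transition consumes tokens from an input place of an explicit transition.'' The only point that requires care is writing the contradiction cleanly so that the reader sees both $|p^\bullet|\geq 2$ and $t'\in p^\bullet$ are needed to invoke $T_{conf}$, and that $T_E\cap T_I=\emptyset$ closes the remaining case $t'=t$.
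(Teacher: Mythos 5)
Your proof is correct, and it rests on the same key observation as the paper's --- that because $T_I$ is non-conflicting, no implicit transition shares an input place with the explicit transition $t$ --- but it deploys that observation along a genuinely different route. The paper argues \emph{dynamically}: it first invokes Proposition~\ref{ProX} to exhibit a feasible trajectory $M_b[\sigma_{min}\rangle M_{min}[\sigma'\rangle M$ with $\varphi(\sigma')=\mathbf{y}-\mathbf{y}_{min}$, and then observes that $t$, once enabled at $M_{min}$, remains enabled while the implicit transitions of $\sigma'$ fire. You argue \emph{statically}: the rows of $C_I$ indexed by ${}^{\bullet}t$ are non-negative (your $Pre(p,t')=0$ argument), so $M=M'+C_I\cdot(\mathbf{y}-\mathbf{y}_{min})$ dominates $M'=M_b+C_I\cdot\mathbf{y}_{min}\geq Pre(\cdot,t)$ on exactly the places that matter, while the hypothesis $M\geq\mathbf{0}$ disposes of the places outside ${}^{\bullet}t$. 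Your version is slightly more self-contained: it never needs the intermediate sequence $\sigma'$ to be realizable, so --- as you note --- acyclicity of the $T_I$-induced subnet plays no role, whereas the paper's proof leans on it (implicitly, via Proposition~\ref{ProX}) to justify that the displayed trajectory is feasible. Both arguments are sound; yours makes the place-by-place mechanism fully explicit and is arguably the more elementary of the two, while the paper's trajectory formulation matches the style reused in its subsequent propositions.
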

\begin{proof}
According to Proposition~1, there exists $\sigma_{min}\in T_I^*$ such that $M_b[\sigma_{min}\rangle M_{min}\geq\mathbf{0}$ where $\varphi(\sigma_{min}) = \mathbf{y}_{min}$.
Since $\mathbf{y}_{min}\leq\mathbf{y}$, the following trajectory is feasible:
\begin{center}
  $M_b[\sigma_{min}\rangle M_{min}[\sigma'\rangle M,$
\end{center}
where $\varphi(\sigma^{\prime})=\mathbf{y}-\mathbf{y}_{min}.$
By Eq.~\eqref{eq:conf}, in $G$ no implicit transition shares any input place with transition $t$.
Hence, once $t$ is enabled at $M_{min}$, it remains enabled regardless other implicit transitions fire, which implies $M[t\rangle$.
\end{proof}

Next, we define the concept of \textit{maximal implicit firing vector} and \textit{i-maximal marking} as follows.
\begin{definition}\label{DefOne}
Given a plant $G=(N, M_0, {\cal F})$ with $\cal{F} = \mathcal{L}_{(\textbf{w},{\textit k})}$, let $\cal{B}$ be its CI-BRG with $\pi = (T_E, T_I)$.
At a basis marking $M_b$ in $\mathcal{B}$, an implicit firing vector (IFV) $\mathbf{y}\in \mathbb{N}^{n_I}$ is said to be \textit{maximal} if there exists a firing sequence $\sigma\in T_I^*, \sigma\in \varphi^{-1}(\mathbf{y})$ such that $M_b[\sigma\rangle$ and there does not exist any other $\sigma^{\prime}\in T_I^*, \varphi(\sigma)\gneq \mathbf{y}$ such that $M_b[\sigma\rangle M$.
A marking $M_{\max}$ is said to be \emph{i-maximal} at $M_b$ if there exists a maximal IFV $\mathbf{y}$ such that $M_b+ C_I\cdot\mathbf{y} =  M_{\max}$.
$\hfill\diamondsuit$
\end{definition}

Further, we show in Proposition \ref{PropTwo} that,  in a CI-BRG, for any basis marking there exists a unique maximal IFV and a unique i-maximal marking.


\begin{proposition}\label{PropTwo}
Given a marked net $\langle N, M_0\rangle$, let $\cal{B}$ be its BRG with respect to $\pi = (T_E, T_I)$ where the $T_I$-induced sub-net of $N$ is conflict-free.
At any basis marking $M_b$ in $\cal{B}$, there exists a unique maximal IFV $\mathbf{y}$.
\end{proposition}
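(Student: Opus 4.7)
The plan is to establish the proposition in two stages: first, existence of some maximal IFV from $M_b$, and second, uniqueness of such a vector.

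For existence, I would exploit the fact that the $T_I$-induced subnet is acyclic. By induction along a topological ordering of $T_I$, each implicit transition can fire only finitely many times from any given initial marking: transitions whose input places have no implicit predecessors are bounded by the available tokens at $M_b$, and transitions further down the topological order are then bounded because their input places receive tokens only from finitely many firings. Hence every implicit firing sequence from $M_b$ has bounded length, so the set of fireable implicit firing vectors admits maximal elements in the componentwise order, each of which is a maximal IFV by Definition~5.

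For uniqueness, the plan is a confluence argument based on conflict-freeness. Since $|p^\bullet|\leq 1$ for every place in the $T_I$-induced subnet, no implicit transition can consume from the input place of another: once an implicit transition $t$ is enabled, only the firing of $t$ itself can disable it. I would formalize this as the following claim: if $M_b[\sigma_1\rangle M_1$ and $M_b[\sigma_2\rangle M_2$ are implicit firing sequences, then there exists $\sigma_2'\in T_I^*$ fireable from $M_1$ with $\varphi(\sigma_1\sigma_2')=\sup(\varphi(\sigma_1),\varphi(\sigma_2))$, where the supremum is taken componentwise. The proof proceeds by induction on $|\sigma_2|$: at each step the next transition $t$ of $\sigma_2$ either has already been fired at least $\varphi(\sigma_2)(t)$ times along $\sigma_1$ (in which case we skip it) or, by persistence, remains enabled after the portion of $\sigma_1$ already executed and can be appended to the constructed sequence.

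With this claim in hand, uniqueness is immediate. If $\mathbf{y}_1,\mathbf{y}_2$ are two maximal IFVs from $M_b$, then $\sup(\mathbf{y}_1,\mathbf{y}_2)$ is also the Parikh image of a fireable implicit sequence; maximality of $\mathbf{y}_1$ forces $\sup(\mathbf{y}_1,\mathbf{y}_2)=\mathbf{y}_1$, hence $\mathbf{y}_2\leq\mathbf{y}_1$, and symmetrically $\mathbf{y}_1\leq\mathbf{y}_2$, giving $\mathbf{y}_1=\mathbf{y}_2$. I expect the main obstacle to be the bookkeeping in the inductive confluence step: one has to track how many occurrences of each implicit transition remain to be fired and argue that at every stage the next required transition is enabled. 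Conflict-freeness gives the local guarantee that only $t$ itself can strip its input places, and acyclicity provides a well-founded order to carry the induction through, so both hypotheses of the proposition are genuinely needed.
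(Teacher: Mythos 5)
Your proof is correct, but it takes a genuinely different route from the paper's. The paper proves only uniqueness, by contradiction: invoking acyclicity, it realizes each of two putative maximal IFVs $\mathbf{y}_1\neq\mathbf{y}_2$ by firing sequences in which the implicit transitions appear in topological (upstream-to-downstream) order, finds the first index $j$ where the exponents differ (say $k_j<k_j'$), and uses conflict-freeness to argue that one extra firing of $t_{i_j}$ can be inserted without disturbing the downstream remainder of the first sequence, contradicting maximality of $\mathbf{y}_1$. You instead route the argument through the classical confluence (Keller-type) property of persistent nets: since $|p^\bullet|\leq 1$ in the $T_I$-induced subnet, an enabled implicit transition can only be disabled by its own firing, so the componentwise supremum of any two fireable implicit firing vectors is again fireable, and maximality then forces any two maximal IFVs to coincide with their common supremum. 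The trade-off is that your version must prove the confluence lemma (the residual-style induction whose bookkeeping you rightly flag as the main obstacle), whereas the paper's version must justify that every fireable IFV admits a topologically sorted realization; each is where the real work hides, and both ultimately rest on the same persistence consequence of conflict-freeness. Two side observations: you additionally establish existence of a maximal IFV, which the statement asserts but the paper's proof silently omits (note that your bound on the firings of implicit transitions with no implicit predecessors tacitly uses boundedness of the marked net, which is guaranteed in the BRG setting); and, as you note, acyclicity is not actually needed for the confluence step itself, only for existence and for $(T_E, T_I)$ to be a valid basis partition.
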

\begin{proof}
By contradiction, suppose that at $M_b$ there exist two different maximal IFVs $\mathbf{y}_1, \mathbf{y}_2$.
Since the $T_I$-induced subnet is acyclic, there exist two firing sequences $\sigma_1=t_{i_1}^{k_1}\cdots t_{i_n}^{k_n}, \sigma_2=t_{i_1}^{k_1'}\cdots t_{i_n}^{k_n'}$ corresponding to $\mathbf{y}_1, \mathbf{y}_2$, respectively, in which $t_{i_j}$'s are sorted from upstream to downstream of the $T_I$-induced subnet.
Since $\mathbf{y}_1\neq\mathbf{y}_2$, there exists a minimal $j$ such that $k_j\neq k_j'$ and for all $j'<j$, $k_j= k_{j'}'$.
Without loss of generality, suppose that $k_j< k_j'$.
Consider $\bar\sigma_1=t_{i_1}^{k_1}\cdots t_{i_{j-1}}^{k_{j-1}} t_{i_j}^{k_j}$ that is a prefix of $\sigma_1$.
Since $t_{i_1}^{k_1}\cdots t_{i_{j-1}}^{k_{j-1}} t_{i_j}^{k_j'}$ is a prefix of $\sigma_2$, transition $t_j$ can fire at least once after $\bar\sigma_1$.
Moreover, since the transitions in $\sigma_1$ appears from upstream to downstream and the $T_I$-induced subnet is conflict-free, the firing of $t_j$ after $\bar\sigma_1$ does not affect the firing of the rest of transitions in $\sigma_1$, which implies that sequence $t_{i_1}^{k_1}\cdots t_{i_j}^{k_j+1}\cdots t_{i_n}^{k_n}$ is firable at $M_b$.
This means that $\mathbf{y}_1$ is not a maximal IFV.
\end{proof}

According to Proposition \ref{PropTwo}, it can be inferred that in a CI-BRG each basis marking has a unique i-maximal marking.
In the sequel, for each basis marking $M_b$ we denote its i-maximal marking as $M_{b, max}$.
From Propositions~\ref{PropOne} and~\ref{PropTwo} we immediately have the following Proposition.
\begin{proposition}\label{CoroOne}
Given a marked net $\langle N, M_0\rangle$ with $N=(P, T, Pre, Post)$, let $\cal{B}$ be its BRG with respect to $\pi = (T_E, T_I)$ where $T_I$ is non-conflicting.
For any basis marking $M_b$ in $\cal{B}$, it holds that
\begin{center}
  $M_b[\sigma_{min}t\rangle \Rightarrow M_{b,\max}[t\rangle,$
\end{center}
where $t\in T_E$, $\sigma_{min}\in \Sigma_{min}(M_b, t)$, and $M_{b,\max}$ is an i-maximal marking at $M_b$.
\end{proposition}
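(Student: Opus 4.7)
The plan is to combine Propositions \ref{PropOne} and \ref{PropTwo} in the natural way suggested by the author's phrasing. By Proposition \ref{PropTwo}, since $T_I$ is non-conflicting (so the $T_I$-induced subnet is conflict-free) and acyclic, there is a unique maximal IFV $\mathbf{y}_{\max}$ at $M_b$, giving $M_{b,\max}=M_b+C_I\cdot\mathbf{y}_{\max}\geq\mathbf{0}$. Proposition \ref{PropOne} will then immediately yield $M_{b,\max}[t\rangle$ provided one can show $\mathbf{y}_{\max}\geq\mathbf{y}_{min}$, where $\mathbf{y}_{min}=\varphi(\sigma_{min})\in Y_{min}(M_b,t)$. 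So the only substantive step is to justify this componentwise domination.

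For that domination step I would mimic the topological-sort argument in the proof of Proposition \ref{PropTwo}. Because the $T_I$-induced subnet is acyclic, order $T_I=\{t_{i_1},\dots,t_{i_n}\}$ from upstream to downstream, and rewrite both a witness sequence $\sigma^*$ for $\mathbf{y}_{\max}$ and the sequence $\sigma_{min}$ in this order; conflict-freeness of the $T_I$-induced subnet ensures the reordered sequences remain firable at $M_b$. Suppose by contradiction that $\mathbf{y}_{\max}\not\geq\mathbf{y}_{min}$, and let $j$ be the smallest index where the count $k_j^{\min}$ of $t_{i_j}$ in $\mathbf{y}_{min}$ exceeds its count $k_j^*$ in $\mathbf{y}_{\max}$. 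Then, exactly as in Proposition \ref{PropTwo}'s proof, after firing the prefix of $\sigma^*$ up through $t_{i_j}^{k_j^*}$ one can still fire $t_{i_j}$ once more (since $\sigma_{min}$ witnesses this), and conflict-freeness guarantees that doing so does not disable the remaining downstream transitions of $\sigma^*$. This produces a firable sequence whose firing vector strictly exceeds $\mathbf{y}_{\max}$, contradicting maximality.

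With $\mathbf{y}_{\max}\geq\mathbf{y}_{min}$ established and $M_b+C_I\cdot\mathbf{y}_{\max}=M_{b,\max}\geq\mathbf{0}$, Proposition \ref{PropOne} applies verbatim (its hypothesis only uses non-conflictingness of $T_I$ to conclude that an enabled explicit transition stays enabled under further implicit firings, which is the case here), and we conclude $M_{b,\max}[t\rangle$.

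The main obstacle I anticipate is the dominance claim $\mathbf{y}_{\max}\geq\mathbf{y}_{min}$. It is intuitive but formally requires the same careful reordering-plus-extension argument already used for Proposition \ref{PropTwo}; in a more general (non-conflict-free) setting it would fail, which is precisely why the non-conflicting hypothesis on $T_I$ is indispensable here.
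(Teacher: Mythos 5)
Your proof is correct and follows essentially the same route as the paper: both reduce the claim to Proposition \ref{PropOne} after establishing that the maximal IFV satisfies $\mathbf{y}_{\max}\geq\varphi(\sigma_{min})$, a domination the paper attributes directly to Definition \ref{DefOne}. The only difference is that you re-prove this domination from scratch with a reordering argument; that extra work is sound but can be short-circuited, since by Proposition \ref{PropTwo} the maximal IFV at $M_b$ is unique and every realizable IFV is dominated by some maximal one (the set of realizable IFVs being finite for a bounded net with acyclic $T_I$-subnet), hence by $\mathbf{y}_{\max}$ itself.
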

\begin{proof}
Since $T_I$ is non-conflicting, any transition $t\in T_E$ is not in conflict with transitions in $T_I$.
Also, the fact that $M_{b,\max}$ is an i-maximal marking at $M_b$ implies that there exists a maximal IFV $\mathbf{y}\in \mathbb{N}^{n_I}$ at $M_b$ such that $M_b+ C_I\cdot\mathbf{y} =  M_{b,\max}$ holds.
Thus, this statement follows Proposition \ref{PropOne}, since $\mathbf{y}\geq \varphi(\sigma_{min})$ by Definition \ref{DefOne}.
\end{proof}

Next, we show in Proposition \ref{PropThree} that in a CI-BRG, for any marking $M$ in the implicit reach of $M_b$, there necessarily exists a firing sequence consisting of implicit transitions $\sigma$ in $T_I^*$ such that $M[\sigma\rangle M_{b,\max}$.

\begin{proposition}\label{PropThree}
Given a plant $G=(N, M_0, {\cal F})$ with $N=(P, T, Pre, Post)$ and $\cal{F} = \mathcal{L}_{(\textbf{w},{\textit k})}$, let $\cal{B}$ be its CI-BRG with respect to $\pi = (T_E, T_I)$.
Given a basis marking $M_b$ and its i-maximal marking $M_{b,\max}\in R_I(M_b)$, the following holds:
\begin{center}
  $(\forall M\in R_I(M_b), \exists \sigma\in T_I^*)\ M[\sigma\rangle M_{b,\max}.$
\end{center}
\end{proposition}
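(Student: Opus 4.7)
The plan is to work with firing vectors and leverage the uniqueness of the maximal implicit firing vector (IFV) $\textbf{y}_{\max}$ established by Proposition~\ref{PropTwo}. Since $M\in R_I(M_b)$ and the $T_I$-induced subnet is acyclic, one may pick an IFV $\textbf{y}\in\mathbb{N}^{n_I}$ such that $M_b+C_I\cdot\textbf{y}=M$; moreover, by the definition of the i-maximal marking, $M_b+C_I\cdot\textbf{y}_{\max}=M_{b,\max}$. The proof then reduces to establishing the componentwise dominance $\textbf{y}\leq\textbf{y}_{\max}$: once this is in hand, the difference $\textbf{y}'=\textbf{y}_{\max}-\textbf{y}\in\mathbb{N}^{n_I}$ satisfies $M+C_I\cdot\textbf{y}'=M_{b,\max}\geq\mathbf{0}$, and applying Proposition~\ref{ProX} to the acyclic $T_I$-induced subnet (taking $M$ as the initial marking) yields a firing sequence $\sigma\in T_I^*$ with $\varphi(\sigma)=\textbf{y}'$ such that $M[\sigma\rangle M_{b,\max}$, which is exactly the conclusion sought.

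To prove $\textbf{y}\leq\textbf{y}_{\max}$, I would argue by contradiction: assume the componentwise maximum $\textbf{z}=\textbf{y}\vee\textbf{y}_{\max}$ strictly dominates $\textbf{y}_{\max}$ in at least one coordinate. The goal is then to show that $\textbf{z}$ is itself a firable IFV at $M_b$, contradicting the maximality of $\textbf{y}_{\max}$. By Proposition~\ref{ProX} applied to the acyclic $T_I$-induced subnet, firability of $\textbf{z}$ from $M_b$ follows from $M_b+C_I\cdot\textbf{z}\geq\mathbf{0}$, so it suffices to verify this inequality place by place.

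Here the non-conflicting property of $T_I$ in a CI-BRG is indispensable: since $T_I\subseteq T\setminus T_{conf}$, any input place $p$ of an implicit transition $t\in T_I$ must satisfy $p^\bullet=\{t\}$ in the whole net, so each place admits at most one implicit consumer, call it $t_p$. A two-case split then closes the bookkeeping. If $\textbf{z}(t_p)=\textbf{y}(t_p)$, the subtraction term at $p$ matches that under $\textbf{y}$, while the additive contributions from upstream producers only grow componentwise (since $\textbf{z}\geq\textbf{y}$), yielding $[M_b+C_I\cdot\textbf{z}](p)\geq M(p)\geq 0$; symmetrically, if $\textbf{z}(t_p)=\textbf{y}_{\max}(t_p)$, then $[M_b+C_I\cdot\textbf{z}](p)\geq M_{b,\max}(p)\geq 0$. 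The main obstacle is recognizing the right auxiliary vector $\textbf{z}$ to contradict maximality; once it is fixed, the verification is routine and depends solely on conflict-freeness. It is worth remarking that the non-increasing property of $T_I$ plays no role here, so this proposition is essentially a structural fact about conflict-free acyclic subnets and does not depend on the particular GMEC defining $\mathcal{F}$.
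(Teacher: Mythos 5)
Your proof is correct, and its overall skeleton coincides with the paper's: identify the firing vector $\mathbf{y}_I$ of $M$ and the maximal IFV $\mathbf{y}_{max}$, establish $\mathbf{y}_I\leq\mathbf{y}_{max}$, and then apply Proposition~\ref{ProX} to the acyclic $T_I$-induced subnet with the difference vector $\mathbf{y}_{max}-\mathbf{y}_I$ to produce the connecting sequence. Where you differ is in how the dominance $\mathbf{y}_I\leq\mathbf{y}_{max}$ is justified. The paper simply asserts it ``according to Definition~\ref{DefOne},'' which is slightly too quick: the definition of a \emph{maximal} IFV only says that no firable vector strictly dominates it, so dominance over \emph{every} firable IFV additionally requires the uniqueness from Proposition~\ref{PropTwo} together with the observation that (by acyclicity and boundedness) every firable IFV extends to a maximal one. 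You instead prove the dominance directly: assuming $\mathbf{y}_I\not\leq\mathbf{y}_{max}$, you show the componentwise join $\mathbf{z}=\mathbf{y}_I\vee\mathbf{y}_{max}$ satisfies $M_b+C_I\cdot\mathbf{z}\geq\mathbf{0}$ (using that non-conflictingness gives each place at most one implicit consumer, so at each place the balance under $\mathbf{z}$ is bounded below by the balance under whichever of $\mathbf{y}_I,\mathbf{y}_{max}$ realizes the join on that place's consumer), hence $\mathbf{z}$ is firable by Proposition~\ref{ProX}, contradicting maximality. This is a self-contained lattice-style argument that in effect re-derives (and slightly strengthens) Proposition~\ref{PropTwo}, and it fills the small gap the paper leaves; the price is some redundancy with that proposition, which you could avoid by instead extending $\mathbf{y}_I$ to a maximal firable IFV and invoking uniqueness. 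Your closing remark that only conflict-freeness and acyclicity matter here, and not the non-increasing condition on $T_I$, is also accurate.
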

\begin{proof}
Consider trajectories $M_b[\sigma_I\rangle M$ and $M_b[\sigma_{max}\rangle M_{b,\max}$, where $\varphi(\sigma_{max}) = \textbf{y}_{max}\in \mathbb{N}^{n_I}$ is the unique maximal IFV at $M_b$ and $\varphi(\sigma_I) = \textbf{y}_I\in \mathbb{N}^{n_I}$.
Since $M=M_b+C_I\cdot\mathbf{y}_I\geq\mathbf{0}$ and $M_{b,\max}=M_b+C_I\cdot\mathbf{y}_{max}\geq\mathbf{0}$, it holds that $M_{b,\max}=M+C_I\cdot (\mathbf{y}_{max}-\mathbf{y}_I)$. Since $\mathbf{y}_{max}\geq\mathbf{y}_I$ (according to Definition \ref{DefOne}), based on Proposition~1, there must exists a firing sequence $\sigma\in T_I^*$ such that $M[\sigma\rangle M_{b,\max}$ where $\varphi(\sigma) = \mathbf{y}_{max}-\mathbf{y}_I$.
\end{proof}

The next proposition shows that in a CI-BRG, the implicit reach of any basis marking $M_b$ contains at least one final marking (i.e., $R_I(M_b)\cap{\cal F}\neq\emptyset$) if and only if the i-maximal marking of $M_b$ is a final marking (i.e., $M_{b, \max}\in{\cal F}$).

\begin{proposition}\label{ProFour}
Given a plant $G=(N, M_0, {\cal F})$ with $N=(P, T, Pre, Post)$ and $\cal{F} = \mathcal{L}_{(\textbf{w},{\textit k})}$, let $\cal{B}$ be its CI-BRG with respect to $\pi = (T_E, T_I)$.
For any basis marking $M_b$ and its i-maximal marking $M_{b,\max}\in R_I(M_b)$, it holds:
\begin{center}
  $R_I(M_b)\cap{\cal F}\neq\emptyset \Leftrightarrow M_{b,\max}\in{\cal F}.$
\end{center}
\end{proposition}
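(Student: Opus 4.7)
The plan is to prove the two directions separately, with the reverse direction essentially free and the forward direction relying on the synergy of Proposition \ref{PropThree} and the \emph{non-increasing} property of $T_I$.

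For the direction $M_{b,\max}\in{\cal F}\Rightarrow R_I(M_b)\cap{\cal F}\neq\emptyset$, I would just observe that by Definition \ref{DefOne} $M_{b,\max}\in R_I(M_b)$, so if moreover $M_{b,\max}\in{\cal F}$ then $M_{b,\max}$ itself witnesses the non-empty intersection.

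The substantive direction is the forward one. I would begin by picking an arbitrary $M\in R_I(M_b)\cap{\cal F}$, so $\mathbf{w}^T\cdot M\le k$. Invoking Proposition \ref{PropThree}, there exists $\sigma\in T_I^*$ with $M[\sigma\rangle M_{b,\max}$; let $\mathbf{y}=\varphi(\sigma)\in\mathbb{N}^{n_I}$, so
\begin{equation*}
M_{b,\max}=M+C_I\cdot\mathbf{y}.
\end{equation*}
Since $C_I$ coincides with the restriction of $C$ to $T_I$, we get
\begin{equation*}
\mathbf{w}^T\cdot M_{b,\max}=\mathbf{w}^T\cdot M+\sum_{t\in T_I}\mathbf{y}(t)\,\bigl(\mathbf{w}^T\cdot C(\cdot,t)\bigr).
\end{equation*}
Because $T_I$ is non-increasing (Definition \ref{CP}, Eq.~\eqref{eq:pos}), every $t\in T_I$ satisfies $\mathbf{w}^T\cdot C(\cdot,t)\le 0$, and $\mathbf{y}(t)\ge 0$ for every $t$, so the sum is non-positive. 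Hence $\mathbf{w}^T\cdot M_{b,\max}\le\mathbf{w}^T\cdot M\le k$, which is exactly $M_{b,\max}\in\mathcal{L}_{(\mathbf{w},k)}={\cal F}$.

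There is no real obstacle in the argument: once Proposition \ref{PropThree} is in hand, the conclusion is a one-line algebraic consequence of the sign condition in Eq.~\eqref{eq:pos}. The only subtlety worth flagging explicitly is the identification $C_I(\cdot,t)=C(\cdot,t)$ for $t\in T_I$, which follows because the $T_I$-induced subnet retains the same place set and the same incidence arcs incident with $T_I$; this is what lets the non-increasing condition, stated in terms of $C$, carry over to $C_I$ and close the inequality chain.
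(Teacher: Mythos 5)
Your proposal is correct and follows essentially the same route as the paper: both directions hinge on the sign condition $\mathbf{w}^T\cdot C(\cdot,t)\le 0$ for $t\in T_I$ from Eq.~\eqref{eq:pos}, with the backward direction being immediate from $M_{b,\max}\in R_I(M_b)$. The only cosmetic difference is that you argue the forward direction directly by transporting an arbitrary final $M$ to $M_{b,\max}$ via Proposition~\ref{PropThree}, whereas the paper argues by contrapositive using $\mathbf{y}\le\mathbf{y}_{max}$; the underlying inequality chain is the same.
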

\begin{proof}
($\Leftarrow$) This part holds since $M_{b,\max}\in R_I(M_b)\cap{\cal F}$.

($\Rightarrow$)
Suppose that $M_{b,\max}\notin{\cal F}$, i.e., $\mathbf{w}^T\cdot M_{b,\max}>k$.
Notice that $M_{b,\max} = M_b + C_I\cdot \mathbf{y}_{max}$ where $\mathbf{y}_{max}\in \mathbb{N}^{n_I}$ is the maximal IFV at $M_b$.
Since by Eq.~\eqref{eq:pos}, $\mathbf{w}^T\cdot C(\cdot, t)\leq 0$ holds for all $t\in T_I$, we can conclude that for any $M\in R_I(M_b)$ such that $M=M_b+C_I\cdot\mathbf{y}$, $\mathbf{y}\leq\mathbf{y}_{max}$ holds.
Therefore we have:
$$\begin{aligned}
\mathbf{w}^T\cdot M=\mathbf{w}^T\cdot M_b+\mathbf{w}^T\cdot C_I\cdot\mathbf{y}&\geq \mathbf{w}^T\cdot M_b+\mathbf{w}^T\cdot C_I\cdot\mathbf{y}_{max}\\
&=\mathbf{w}^T\cdot M_{b,\max}>k.
\end{aligned}$$
Therefore $R_I(M_b)\cap{\cal F}=\emptyset$.
\end{proof}

Intuitively speaking, since the firing of $T_I$ does not increase (and possibly decreases) the token count of $(\textbf{w},{\textit k})$, the token count at any marking in $R_I(M_b)$ is not less than that of $M_{b,\max}$ (which is reached by firing the maximal number of $T_I$ transitions from $M_b$).
Hence, if the token count of $(\textbf{w},{\textit k})$ at $M_{b, \max}$ exceeds $k$, then the token count at any other marking in $R_I(M_b)$ also exceeds $k$.
%

Finally, we are ready to present the main result of this work.
The following theorem provides a necessary and sufficient condition for the non-blockingness of a net.

\begin{theorem}\label{Theorem1}
Given a plant $G=(N, M_0, {\cal F})$ with $N=(P, T, Pre, Post)$ and $\cal{F} = \mathcal{L}_{(\textbf{w},{\textit k})}$, let $\cal{B}$ be its CI-BRG with respect to $\pi = (T_E,T_I)$.
System $G$ is non-blocking if and only if for any basis marking $M_b$ in $\cal{B}$, there exists a basis marking $M_b^{\prime}$ accessible from $M_b$ and $R_I(M_b^{\prime})\cap{\cal F}\neq\emptyset$.
\end{theorem}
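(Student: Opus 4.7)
The plan is to prove both directions of the equivalence by combining the structural facts established in Propositions~\ref{PropOne}--\ref{ProFour}. Throughout, I denote by $\mathbf{y}_{\max}$ the unique maximal IFV at a basis marking (guaranteed by Proposition~\ref{PropTwo}) and by $M_{b,\max}$ the associated i-maximal marking.

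For the necessity ($\Rightarrow$): Suppose $G$ is non-blocking and pick any basis marking $M_b\in\mathcal{M_B}$. Because $M_b\in R(N,M_0)$, there exists a firing $M_b[\sigma\rangle M_f$ with $M_f\in{\cal F}$. By the construction of the BRG (Algorithm~2 in~\cite{c3}), $\sigma$ can be decomposed into alternating blocks of implicit firings and single explicit firings, with each explicit firing corresponding to an arc in $\cal B$, so the sequence of basis markings traversed defines a path from $M_b$ to some $M_b'\in\mathcal{M_B}$ with $M_f\in R_I(M_b')$. Hence $R_I(M_b')\cap{\cal F}\ni M_f$.

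For the sufficiency ($\Leftarrow$): Pick any $M\in R(N,M_0)$. By Proposition~\ref{BRGpro}, $M\in R_I(M_b)$ for some $M_b\in\mathcal{M_B}$; by hypothesis there is a BRG path $M_b\xrightarrow{(t_1,y_1)}M_1\xrightarrow{(t_2,y_2)}\cdots\xrightarrow{(t_n,y_n)}M_b'$ with $R_I(M_b')\cap{\cal F}\neq\emptyset$, and Proposition~\ref{ProFour} upgrades this to $M_{b',\max}\in{\cal F}$. I then construct explicitly a firing trajectory in $N$ from $M$ to $M_{b',\max}$, certifying that $M$ is non-blocking. Starting at $M\in R_I(M_b)$, apply Proposition~\ref{PropThree} to fire implicit transitions reaching $M_{b,\max}$; by Proposition~\ref{CoroOne} the transition $t_1$ fires at $M_{b,\max}$ yielding $M_{b,\max}+C(\cdot,t_1)$; the identity
\begin{center}
$M_{b,\max}+C(\cdot,t_1)=M_1+C_I\cdot(\mathbf{y}_{\max}-y_1)$
\end{center}
together with $\mathbf{y}_{\max}\geq y_1$ and Proposition~\ref{ProX} place this marking inside $R_I(M_1)$. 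Iterating this scheme along the BRG path---Proposition~\ref{PropThree} to climb to the next i-maximal marking, Proposition~\ref{CoroOne} to fire the next explicit transition, and the incidence identity to land back inside the next implicit reach---eventually yields a firing sequence from $M$ to $M_{b',\max}\in{\cal F}$.

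The main obstacle is precisely the inductive step in the sufficiency direction: one must certify that after firing the explicit transition $t_{i+1}$ from the i-maximal marking $M_{i,\max}$, the resulting marking belongs to $R_I(M_{i+1})$, so that Proposition~\ref{PropThree} can be invoked again. This relies on three ingredients acting together---the non-conflict of $T_I$ with $T_E$ (guaranteeing $t_{i+1}$ stays enabled at $M_{i,\max}$), the component-wise dominance $\mathbf{y}_{\max}\geq y_{i+1}$ of the maximal IFV over the minimal explanation vector of $t_{i+1}$, and the acyclicity of the $T_I$-induced subnet (letting us realize $\mathbf{y}_{\max}-y_{i+1}$ as an actual implicit firing sequence via Proposition~\ref{ProX}). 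Once this closure property is secured, the non-blockingness of $G$ follows by straightforward induction on the length of the BRG path.
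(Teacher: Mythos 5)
Your proposal is correct and follows essentially the same route as the paper's own proof: the necessity direction rearranges an arbitrary firing sequence into minimal-explanation blocks so that it projects onto a path in $\cal B$, and the sufficiency direction chains Proposition~\ref{PropThree} (climb to the i-maximal marking), Proposition~\ref{CoroOne}/\ref{PropOne} (fire the next explicit transition), and the incidence identity with $\mathbf{y}_{\max}\geq y_{i+1}$ to re-enter the next implicit reach, concluding with Proposition~\ref{ProFour}. Your explicit write-up of the closure identity $M_{b,\max}+C(\cdot,t_1)=M_1+C_I\cdot(\mathbf{y}_{\max}-y_1)$ is in fact slightly more detailed than the paper's corresponding step.
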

\begin{proof}
(only if) Suppose that $G$ is non-blocking.
For any basis marking $M_b$ in $\cal{B}$, there exists a sequence $\sigma$ such that $M_b[\sigma\rangle M\in{\cal F}$.
We write $\sigma=\sigma_1t_{i_1}\cdots\sigma_nt_{i_n}\sigma_{n+1}$ where all $\sigma_i\in T_I^*, t_{i_j}\in T_E, j=1,\ldots, n$.
Following the procedure in the proof of Theorem 3.8 in \cite{c8}, we can repeatedly move transitions in each $\sigma_j$ ($j\in\{1, \ldots, n\}$) to somewhere after $t_{i_j}$ to obtain a new sequence $\sigma_{min,1}t_{i_1}\sigma_{min,2}t_{i_2}\cdots \sigma_{min, n}t_{i_n}\sigma_{n+1}'$ such that
$$M_b[\sigma_{min,1}t_{i_1}\rangle M_{b,1}[\sigma_{min,2}t_{i_2}\rangle\cdots[\sigma_{min, n}t_{i_n}\rangle M_{b, n}[\sigma_{n+1}'\rangle M$$
where each $\sigma_{\min,j}$ is a minimal explanation of $t_{i_j}$ at $M_{b,j}$ for $j=1,\ldots n$.
Hence, basis marking $M_b'$ is accessible from $M_b$, and $M\in R_I(M_b')$ holds.

(if) Let $M_{b,0}$ be an arbitrary basis marking and $M$ be an arbitrary marking in $R_I(M_{b,0})$. Suppose that in $\cal{B}$ there exists a basis marking $M_{b,n}$ accessible from $M_{b,0}$, i.e.,
\begin{center}
  $M_{b,0}\xrightarrow{(t_1, \mathbf{y}_{min,1})} M_{b,1}\xrightarrow{(t_2, \mathbf{y}_{min,2})} M_{b,2}\cdots\xrightarrow{(t_n, \mathbf{y}_{min,n})} M_{b,n},$
\end{center}
where $R_I(M_{b,n})\cap{\cal F}\neq\emptyset$.
By Proposition \ref{PropThree}, $M_{i,\max}[t_i\rangle$ holds where $M_{i, \max}$ is the i-maximal marking at $M_{b, i}\ (i\in \{1, 2, \ldots, n\})$. Now we prove that from $M$ there exists a firing sequence that reaches $M_{n,\max}$. We use $M\rightarrow M'$ to denote that there exists sequence $\sigma$ such that $M[\sigma\rangle M'$.

By Proposition \ref{PropThree}, it holds that $M\rightarrow M_{0,\max}$. By Proposition \ref{PropOne}, $M_{0,\max}[t_1\rangle M_1$ where $M_1\in R_I(M_{b,1})$. By regarding $M_{b,1}$ and $M_1$ as the original $M_{b,0}$ and $M$, respectively, the above reasoning can be repeatedly applied. Hence, the following trajectory is feasible:
\begin{center}
  $M\rightarrow M_{0, \max}\rightarrow M_1\rightarrow M_{1,\max}\rightarrow \cdots\rightarrow M_n\rightarrow M_{n,\max}$
\end{center}
where $M_i\in R_I(M_{b, i})\ (i\in \{1, 2, \ldots, n\})$. By Proposition \ref{ProFour}, $M_{n,\max}\in{\cal F}$ holds. Therefore, $G$ is non-blocking.
\end{proof}

Theorem~\ref{Theorem1} indicates that the non-blockingness of a plant $G$ can be verified by checking if all basis markings in the CI-BRG are accessible to some basis markings whose implicit reach contains final markings.
By Proposition~\ref{ProFour}, to check $R_I(M_b)\cap{\cal F}\neq\emptyset$ it suffices to test if the i-maximal marking $M_{b, \max}$ is final.
This can be done by solving the following \textit{integer linear programming problem} (ILPP) for all $M_b$ in $\cal{B}$:
    \begin{equation}\label{EqILLP}
\left\{
             \begin{array}{lr}
             \max \quad  \textbf{1}^T\cdot \textbf{y}_I\\

            s.t. \quad\ \ M_b+C_I\cdot \textbf{y}_I=M_{b,max}\\

            \quad \quad \quad \textbf{w}^T\cdot M_{b,max} \leq k\\

            \quad \quad \quad M_{b,max}\in \mathbb{N}^{m}\\

            \quad \quad \quad \textbf{y}_I\in \mathbb{N}^{n_I}
\end{array}\right.
\end{equation}

\begin{discussion}
For simplicity, up to now, we assume a single GMEC characterization ($\mathcal{L}_{(\textbf{w},{\textit k})}$) for the final marking set $\mathcal{F}$; however, our approach can be further generalized to:
   \begin{itemize}
     \item [(a)] $\mathcal{F}$ defined by the conjunction of a finite number of $r$ GMECs (namely an \textit{AND-GMEC}), i.e., $\mathcal{L}_{AND} = \{M\in \mathbb{N}^{m}\mid \textbf{W}^T \cdot M \leq \textbf{k}\} = \bigcap_{(\textbf{w}_i, k_i)\in (\textbf{W},\textbf{k})}\mathcal{L}_{(\textbf{w}_i, k_i)}$, where $\textbf{W} = [\textbf{w}_1\ \textbf{w}_2 \cdots \textbf{w}_r] \in \mathbb{Z}^{m\times r}$ and $\textbf{k} = [k_1\ k_2 \cdots k_r]^T \in \mathbb{N}^{r}$. $(\textbf{w}_i, k_i)\in (\textbf{W}, \textbf{k})$ implies that $(\textbf{w}_i, k_i) = (\textbf{W}(i, \cdot), \textbf{k}(i, \cdot))$ and $i\in \{1, 2, \cdots, r\}$. In such a case, the problem can be solved in the same way by revising the constraint $\textbf{w}^T\cdot M\leq k$ in ILPP (\ref{EqILLP}) as $\textbf{W}^T\cdot M\leq \textbf{k}$;
     \item [(b)] $\mathcal{F}$ defined by the union of a finite number of $s$ GMECs (namely an \textit{OR-GMEC}), i.e., $\mathcal{L}_{OR} = \bigcup_{i\in\{1, 2, \ldots, s\}}\mathcal{L}_{(\textbf{w}_i,k_i)}$ where $\textbf{w}_i\in \mathbb{N}^m$ and $k_i\in \mathbb{N}$. Then, the solution can be done by revising the constraint $\textbf{w}^T\cdot M\leq k$ in ILPP (\ref{EqILLP}) as a disjunctive form of constraints, which can be transformed into its equivalent conjunctive form \cite{cabasino2007identification}.
   \end{itemize}

Note that in both two cases, to construct a CI-BRG, the set $T_E$ needs to be expanded based on any single GMEC to ensure the corresponding set $T_I$ being non-increasing. So, in general, the set $T_E$ is not minimal, which leads to the corresponding CI-BRG larger than the minimal ones. However, in a system in practice, the final markings are usually the states at which all resources (parts/products, vehicles, robots, etc.) are in their ``idle'' or ``finished'' places. The single GMEC (or multiple GMECs) that defines a final set only restricts the tokens in these places. Hence, in practice, the set $T_E$ may not be too much larger than a minimal one; thus, the CI-BRG should not be too large in size and can still be efficiently computed.$\hfill\square$
\end{discussion}

\subsection{Non-blockingness Verification in Non-deadlock-free Nets}

As we have mentioned in the beginning of the previous subsection, Theorem \ref{Theorem1} does not require $G$ be deadlock-free.
In this subsection we discuss the reason behind it.

In this subsection, we discuss the reason behind it.
The following result shows that if $G$ is not deadlock-free, all dead markings are exactly the i-maximal markings of some basis markings in the CI-BRG.
We denote by $\cal D$ the set of dead markings in $R(N, M_0)$, i.e., ${\cal D}=\{M\in R(N, M_0)\mid(\forall t\in T)\, \neg M[t\rangle\}$.

\begin{proposition}\label{prop:deadlock}
Given a plant $G=(N, M_0, {\cal F})$ with $\cal{F} = \mathcal{L}_{(\textbf{w},{\textit k})}$, let $\cal{B}$ be its CI-BRG with respect to $\pi= (T_E, T_I)$.
For any basis marking $M_b$ in $\cal{B}$ such that $R_I(M_b)\cap{\cal D}\neq\emptyset$, $R_I(M_b)\cap{\cal D}=\{M_{b, \max}\}$ holds.
\end{proposition}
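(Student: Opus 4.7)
The plan is to reduce the claim to a direct consequence of Proposition~\ref{PropThree}, which asserts that from every $M \in R_I(M_b)$ there exists a firing sequence $\sigma \in T_I^*$ such that $M[\sigma\rangle M_{b,\max}$. The hypothesis of the proposition guarantees the existence of some dead marking $M_d \in R_I(M_b) \cap {\cal D}$, and the goal is to show that this $M_d$ must coincide with $M_{b,\max}$, and that $M_{b,\max}$ is the unique dead marking in $R_I(M_b)$.

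First I would fix an arbitrary $M_d \in R_I(M_b) \cap {\cal D}$ and invoke Proposition~\ref{PropThree} to obtain a sequence $\sigma \in T_I^*$ with $M_d[\sigma\rangle M_{b,\max}$. Because $M_d$ is dead, no transition in $T$ is enabled at $M_d$; in particular no transition in $T_I \subseteq T$ is enabled at $M_d$. Hence the sequence $\sigma$ cannot contain a first transition, i.e., $\sigma$ must be the empty sequence, which forces $M_d = M_{b,\max}$. This simultaneously shows that $M_{b,\max} \in {\cal D}$ (since it coincides with the dead marking $M_d$) and that every element of $R_I(M_b) \cap {\cal D}$ equals $M_{b,\max}$, thereby establishing the set equality $R_I(M_b) \cap {\cal D} = \{M_{b,\max}\}$.

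I do not expect a serious obstacle here: the argument is essentially a one-line consequence of Proposition~\ref{PropThree}, and the uniqueness of $M_{b,\max}$ has already been secured through Proposition~\ref{PropTwo}. The only subtlety worth stating explicitly is that $\cal D$ is defined with respect to the full transition set $T$, so that the non-enabledness at a dead marking applies \emph{a fortiori} to $T_I$-transitions; this is what collapses the sequence $\sigma$ to the empty sequence. No integer programming, linear algebra on $C_I$, or induction on the BRG structure is needed, so the proof should fit in a few lines once Proposition~\ref{PropThree} is cited.
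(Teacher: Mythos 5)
Your proof is correct and follows essentially the same route as the paper: both arguments rest on the fact that every marking in $R_I(M_b)$ is coreachable to $M_{b,\max}$ via a $T_I$-sequence, so a dead marking (at which the connecting sequence must be empty) can only be $M_{b,\max}$ itself. Your citation of Proposition~\ref{PropThree} is actually the more accurate one — the paper's proof invokes Proposition~\ref{PropTwo}, but the coreachability property it uses is the content of Proposition~\ref{PropThree}.
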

\begin{proof}
The ``if'' trivially holds.
For the ``only if'' part, suppose that $R_I(M_b)\cap{\cal D}\neq\emptyset$.
By Proposition~\ref{PropTwo}, all markings $M\in R_I(M_b)$ are coreachable to $M_{b,\max}$, which indicates that all $M\in R_I(M_b)\setminus\{M_{b,\max}\}$ are not dead.
Therefore, the only dead marking in $R_I(M_b)\cap{\cal D}\neq\emptyset$ is $M_{b,\max}$.
\end{proof}

Notice that $R(N, M_0)=\bigcap_{M_b\in{\cal B}}R_I(M_b)$.
Proposition~\ref{prop:deadlock} indicates that all dead markings in $R(N, M_0)$ are $M_{b, \max}$.
Note that we do not need to explicitly compute all the i-maximal markings thanks to the following theorem.

\begin{proposition}\label{prop:deadEnd}
Given a plant $G=(N, M_0, {\cal F})$ with $\cal{F} = \mathcal{L}_{(\textbf{w},{\textit k})}$, let $\cal{B}$ be its CI-BRG with respect to $\pi= (T_E, T_I)$.
For any basis marking $M_b$ in $\cal{B}$, $R_I(M_b)\cap{\cal D}\neq\emptyset$ if and only if $M_b$ does not have any outbound arc in $\cal{B}$.
\end{proposition}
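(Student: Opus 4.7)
The plan is to prove both implications by pivoting on the unique i-maximal marking $M_{b,\max}$ at $M_b$, combined with Proposition~\ref{prop:deadlock} and Proposition~\ref{CoroOne}.

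For the ($\Rightarrow$) direction, I would first apply Proposition~\ref{prop:deadlock} to conclude that $R_I(M_b)\cap{\cal D}=\{M_{b,\max}\}$, so $M_{b,\max}$ is itself dead. Then I would proceed by contradiction: if $M_b$ had an outbound arc in $\cal{B}$, it would be labeled by some $(t,\mathbf{y}_{min})$ with $t\in T_E$ and $\mathbf{y}_{min}\in Y_{min}(M_b,t)$. Picking $\sigma_{min}\in\Sigma_{min}(M_b,t)$ with $\varphi(\sigma_{min})=\mathbf{y}_{min}$ gives $M_b[\sigma_{min}t\rangle$, and Proposition~\ref{CoroOne} would then yield $M_{b,\max}[t\rangle$, contradicting the deadness of $M_{b,\max}$.

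For the ($\Leftarrow$) direction, assuming $M_b$ has no outbound arc, the goal is to show $M_{b,\max}\in{\cal D}$, which immediately delivers $R_I(M_b)\cap{\cal D}\neq\emptyset$. I would check separately that no implicit and no explicit transition is enabled at $M_{b,\max}$. If some $t'\in T_I$ were enabled at $M_{b,\max}$, then $M_b[\sigma_{max}t'\rangle$ would produce an IFV strictly greater than $\mathbf{y}_{max}$ (namely $\mathbf{y}_{max}$ incremented in the $t'$-component), contradicting the maximality guaranteed by Definition~\ref{DefOne} and the uniqueness in Proposition~\ref{PropTwo}. If some $t\in T_E$ were enabled at $M_{b,\max}$, then $\sigma_{max}\in\Sigma(M_b,t)$, so $Y(M_b,t)\neq\emptyset$; since any nonempty subset of $\mathbb{N}^{n_I}$ has a minimal element under the componentwise order, $Y_{min}(M_b,t)\neq\emptyset$, which would furnish an outbound arc at $M_b$ labeled with some $(t,\mathbf{y}_{min})$, contradicting the hypothesis.

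The main obstacle, while modest, lies in cleanly handling the implicit sub-case of ($\Leftarrow$): one must ensure that extending $\sigma_{max}$ by one additional implicit firing genuinely produces a strictly larger IFV, so that Proposition~\ref{PropTwo} is actually violated rather than merely yielding a distinct but incomparable firing sequence. Once that is settled, the remainder is a direct chain of applications of the CI-BRG properties built up in the preceding subsection.
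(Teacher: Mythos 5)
Your proof is correct and follows essentially the same route as the paper's: the forward direction is the paper's contrapositive argument recast as a contradiction (you invoke Proposition~\ref{CoroOne}, which packages the same use of Propositions~\ref{PropOne} and~\ref{PropTwo}), and the reverse direction likewise pivots on showing $M_{b,\max}$ is dead. In fact your $(\Leftarrow)$ case is slightly more careful than the paper's, which simply asserts that no explicit transition can fire, whereas you justify it via the nonemptiness of $Y_{\min}(M_b,t)$; and the ``obstacle'' you flag is not one, since appending $t'$ to $\sigma_{\max}$ increments the firing vector in the $t'$ component, giving a vector that is componentwise $\gneq\mathbf{y}_{\max}$.
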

\begin{proof}
(only if) By contrapositive.
Suppose that $M_b$ has an outbound arc labeled by $(t, \mathbf{y})$.
It indicates that there exists a sequence $\sigma$ whose firing vector is $\varphi(\sigma)=\mathbf{y}$ such that $M_b[\sigma\rangle M [t\rangle$.
By Propositions~\ref{PropOne} and \ref{PropTwo}, $M_b[\sigma\rangle M[\sigma'\rangle M_{b, \max} [t\rangle$ holds, which means that $t$ is enabled at the i-maximal marking $M_{b, \max}$, i.e., $M_{b, \max}$ is not dead.
By Proposition~\ref{prop:deadlock}, $R_I(M_b)\cap{\cal D}=\emptyset$ holds.

(if) Suppose that $M_b$ does not have any outbound arc.
This implies that from $M_b$ no explicit transition can fire any more.
Since the $T_I$-induced subnet is acyclic, the number of implicit transitions firable from $M_b$ is bounded, which implies that $M_{b, max}\in R_I(M_b)$ is dead.
\end{proof}
\begin{corollary}\label{cor:deadEnd}
In a CI-BRG, if the i-maximal marking $M_{b, \max}$ of a basis marking $M_b$ is dead and not final, then $M_b$ is not accessible to any $M_b'$ such that $R_I(M_b')\cap{\cal F}\neq\emptyset$.
\end{corollary}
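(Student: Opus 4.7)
The plan is to chain the two results just established, Proposition \ref{prop:deadEnd} and Proposition \ref{ProFour}, so that virtually no extra work is needed. The hypothesis provides two separate pieces of information about $M_{b,\max}$: it is dead, and it is not final. I would exploit these one at a time.

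First, from the assumption that $M_{b,\max}$ is dead I would immediately deduce that $R_I(M_b)\cap{\cal D}\neq\emptyset$ (since $M_{b,\max}\in R_I(M_b)$ by definition of the i-maximal marking). Proposition \ref{prop:deadEnd} then forces $M_b$ to have no outbound arc in the CI-BRG $\cal{B}$. Consequently $M_b$ is a sink in $\cal{B}$, and the \emph{only} basis marking accessible from $M_b$ is $M_b$ itself.

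Second, I would rule out the single remaining candidate $M_b' = M_b$ using Proposition \ref{ProFour}, which gives the equivalence $R_I(M_b)\cap{\cal F}\neq\emptyset \Leftrightarrow M_{b,\max}\in{\cal F}$. The assumption that $M_{b,\max}$ is not final makes the right-hand side false, so $R_I(M_b)\cap{\cal F}=\emptyset$. Combining the two steps, no basis marking $M_b'$ reachable from $M_b$ in $\cal{B}$ satisfies $R_I(M_b')\cap{\cal F}\neq\emptyset$.

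There is essentially no obstacle here; the only point to be careful about is the convention on accessibility. If one takes ``accessible from $M_b$'' to exclude $M_b$ itself, Proposition \ref{prop:deadEnd} alone closes the argument, since there are then no candidate $M_b'$ at all. If one allows reflexive accessibility (which is the stronger reading and the one I would adopt to make the statement genuinely useful for pruning the CI-BRG in a verification procedure), then the additional invocation of Proposition \ref{ProFour} is exactly what is needed. I would state the proof so that both readings are covered in a single sentence.
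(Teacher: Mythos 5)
Your proof is correct and follows exactly the paper's own argument: Proposition~\ref{prop:deadEnd} gives that $M_b$ has no outbound arc (so the only candidate $M_b'$ is $M_b$ itself), and Proposition~\ref{ProFour} rules out $M_b$ because $M_{b,\max}\notin{\cal F}$ implies $R_I(M_b)\cap{\cal F}=\emptyset$. Your remark on the reflexive-versus-strict reading of accessibility is a sensible clarification, but the substance matches the paper's proof.
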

\begin{proof}
This corollary holds since $M_b$ has no outbound arc (Proposition~\ref{prop:deadEnd}) and $R_I(M_b)\cap{\cal F}=\emptyset$ (Proposition~\ref{ProFour}).
\end{proof}

One can see that the case in Corollary~\ref{cor:deadEnd} is included in Theorem~\ref{Theorem1}.
Thus, the non-blockingness can be verified by Theorem~\ref{Theorem1} regardless of the deadlock-freeness of $G$.

\subsection{Algorithm}

Based on the results we have obtained so far, in this subsection we develop a method to verify non-blockingness of a plant using CI-BRG.
\begin{table*}
\caption{Analysis of the reachability graph, minimax-BRG, and CI-BRG for the plant in Fig. \ref{FigBenchmark}.}\label{tablenew1}
\scalebox{0.8}{
\begin{threeparttable}
\begin{tabular}{c|c|c||c|c||c|c||c|c||c||c||c}
\toprule[1pt]
  Run & $\alpha$ & $\beta$ & $|R(N, M_0)|$ & Time\ (s) & $|\mathcal{M_{B_M}}|$ & Time\ (s) & $|\mathcal{M_{B}}|$ & Time\ (s) & Non-blocking? & $|\mathcal{M_{B}}|/|R(N, M_0)|$ & Time ratio \\
  \hline
  1 & 1 & 1 & 1966 & 10 & 284 & 2 & 604 & 1.7 & Yes & 30.7$\%$ & 17$\%$\\
  \hline
  2 & 1 & 2 & 12577 & 277 & 1341 & 15 & 2145 & 11 & Yes & 17$\%$ & 4$\%$\\
  \hline
  3 & 2 & 2 & 76808 & 12378 & 5961 & 179 & 7718 & 105 & No & 10$\%$ & 0.8$\%$\\
  \hline
  4 & 2 & 3 & - & o.t. & 14990 & 1028 & 16438 & 470 & No & - & -\\
  \hline
  5 & 2 & 4 & - & o.t. & 26716 & 3126 & 26648 & 1248 & Yes & - & -\\
  \hline
  6 & 3 & 3 & - & o.t. & 38551 & 6697 & 37118 & 2492 & Yes & - & -\\
  \hline
  7 & 3 & 4 & - & o.t. & 67728 & 22018 & 59315 & 6449 & No & - & - \\
  \hline
  8 & 4 & 4 & - & o.t. & - & o.t. & 101420 & 19491 & No & - & - \\
  \bottomrule[1pt]
\end{tabular}
\begin{tablenotes}
        \item[*] The computing time is denoted by \textit{overtime} (o.t.) if the program does not terminate within 36,000 seconds (10 hours).
      \end{tablenotes}
\end{threeparttable}}
\end{table*}

\begin{algorithm}
\caption{Non-blockingness Verification Using CI-BRG} %
\begin{algorithmic}[1]
\REQUIRE A bounded plant $G=(N, M_0, {\cal F})$
\ENSURE ``$G$ is nonblocking'' $\slash$ ``$G$ is blocking''
\STATE Find a basis partition $\pi = (T_E, T_I)$ where $T_I$ is non-conflicting and non-increasing;
\STATE Construct the CI-BRG $\mathcal{B} = (\mathcal{M_B}, {\rm Tr}, \Delta, M_0)$ of $G$;
\STATE $\hat{\mathcal{M_B}}: = \emptyset$;
\FORALL {$M_b\in \mathcal{M_B}$,}
\IF {{ILPP (\ref{EqILLP})} has a feasible solution,}
\STATE {$\hat{\mathcal{M_B}}: = \hat{\mathcal{M_B}}\cup\{M_b\}$;}
\ENDIF
\ENDFOR
\FORALL {$M_b^{\prime}\in \mathcal{M_B}\setminus\hat{\mathcal{M_B}}$,}
\IF {$\nexists \hat{M_b}\in \hat{\mathcal{M_B}}$, $\nexists \sigma\in {\rm Tr}^*$ s.t. $M_b^{\prime}\xrightarrow{\sigma}\hat{M_b}$,}
\STATE Output ``$G$ is blocking'' and Exit;
\ELSE
\STATE {Continue;}
\ENDIF
\ENDFOR
\STATE {Output ``$G$ is non-blocking'' and Exit.}
\end{algorithmic}\label{AlgoNB}
\end{algorithm}

In brief, Algorithm \ref{AlgoNB} consists of two stages:
\begin{itemize}
  \item Stage (i), steps 1--8: construct the CI-BRG and determine for each basis marking $M_b$ if $R_I(M_b)\cap{\cal F}\neq\emptyset$. The latter done by solving ILPP~(\ref{EqILLP}) for all basis markings;
  \item Stage (ii), steps 9--16: check if any basis marking in $\cal{B}$ is co-reachable to at least a basis marking that is coreachable to some final markings, which can be done by applying a search algorithm (e.g., \textit{Dijkstra}) in the underlying digraph of the CI-BRG, whose complexity is polynomial in the size of $\cal{B}$.
\end{itemize}

\begin{proposition}
Algorithm~\ref{AlgoNB} is correct.
\end{proposition}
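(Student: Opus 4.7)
The plan is to verify that Algorithm~\ref{AlgoNB} correctly implements the necessary and sufficient condition of Theorem~\ref{Theorem1}: namely, $G$ is non-blocking if and only if every basis marking $M_b \in \mathcal{M_B}$ is coreachable in $\mathcal{B}$ to some $M_b'$ with $R_I(M_b') \cap \mathcal{F} \neq \emptyset$. I would break the argument into two claims, one per stage of the algorithm, and combine them via Theorem~\ref{Theorem1}.

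First, I would establish that Stage (i) (steps 1--8) produces the set $\hat{\mathcal{M_B}} = \{M_b \in \mathcal{M_B} : R_I(M_b) \cap \mathcal{F} \neq \emptyset\}$. The key point is that ILPP~(\ref{EqILLP}) is feasible at $M_b$ if and only if there exists $\mathbf{y}_I \in \mathbb{N}^{n_I}$ such that $M = M_b + C_I \cdot \mathbf{y}_I \geq \mathbf{0}$ and $\mathbf{w}^T \cdot M \leq k$. Because the $T_I$-induced subnet is acyclic, Proposition~\ref{ProX} converts the algebraic existence of such a vector into the existence of a firing sequence $\sigma \in T_I^{\ast}$ with $M_b[\sigma\rangle M$. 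Hence feasibility of ILPP~(\ref{EqILLP}) is equivalent to $R_I(M_b) \cap \mathcal{F} \neq \emptyset$. (Alternatively, Proposition~\ref{ProFour} shows the same thing reduces to testing $M_{b,\max} \in \mathcal{F}$, and the $\max \mathbf{1}^T \cdot \mathbf{y}_I$ objective would recover precisely $M_{b,\max}$; but for correctness one only needs feasibility.)

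Second, I would show that Stage (ii) (steps 9--16) correctly decides the condition of Theorem~\ref{Theorem1} on $\mathcal{B}$. The loop performs, for each $M_b' \in \mathcal{M_B} \setminus \hat{\mathcal{M_B}}$, a reachability query in the underlying digraph of $\mathcal{B}$ for some $\hat{M_b} \in \hat{\mathcal{M_B}}$. If any such $M_b'$ cannot reach $\hat{\mathcal{M_B}}$, then by Theorem~\ref{Theorem1} $G$ is blocking and the algorithm outputs ``blocking''. Otherwise, every basis marking either lies in $\hat{\mathcal{M_B}}$ (and is trivially reached via the empty path) or reaches $\hat{\mathcal{M_B}}$ through a nonempty sequence in $\mathrm{Tr}^{\ast}$, so the hypothesis of Theorem~\ref{Theorem1} holds and the output ``non-blocking'' is correct. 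Termination is immediate from $|\mathcal{M_B}| < \infty$.

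The main obstacle will be the ILPP-feasibility-versus-reachability equivalence in Stage (i): one must check that the combined constraints $M_{b,\max} \in \mathbb{N}^m$ and $\mathbf{y}_I \in \mathbb{N}^{n_I}$ with $M_b + C_I \cdot \mathbf{y}_I = M_{b,\max}$ truly yield a firable sequence rather than only a formal incidence-matrix solution; this hinges on the acyclicity of the $T_I$-induced subnet, which is built into the notion of a basis partition and therefore holds for any CI-BRG. Once this equivalence is granted, the rest of the proof is a direct combination of Theorem~\ref{Theorem1} with the fact that Stage (ii) is a faithful digraph-coreachability test.
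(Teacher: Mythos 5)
Your proposal is correct and takes essentially the same route as the paper's own proof: both reduce correctness to Theorem~\ref{Theorem1}, identifying Stage (i) as computing $\hat{\mathcal{M_B}}=\{M_b \mid R_I(M_b)\cap\mathcal{F}\neq\emptyset\}$ via ILPP~(\ref{EqILLP}) and Stage (ii) as a coreachability test on the digraph of $\mathcal{B}$. The only difference is that you spell out the equivalence between ILPP feasibility and $R_I(M_b)\cap\mathcal{F}\neq\emptyset$ (via Proposition~\ref{ProX} and acyclicity of the $T_I$-induced subnet, correctly noting the objective function is immaterial for correctness), a step the paper leaves implicit by simply invoking the i-maximal marking and Proposition~\ref{ProFour}.
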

\begin{proof}
The set $\hat{\mathcal{M_B}}$ in Algorithm~\ref{AlgoNB} records all basis markings whose i-maximal marking is final.
By Theorem~\ref{Theorem1}, the net is blocking if and only if there exists a basis marking inaccessible to any basis marking in $\hat{\mathcal{M_B}}$.
This coincide with Algorithm~\ref{AlgoNB} who outputs BLOCKING if and only if such a basis marking is detected in Step~10.
\end{proof}


\begin{example}\label{Example3}[Ex.~\ref{Example22} cont.]
  Consider again the plant $G = (N, M_0, \cal{F})$ with $\cal{F} = \mathcal{L}_{(\textbf{w},{\textit k})}$, $\textbf{w} = [0\ 0\ 0\ 1\ 1\ 1]^{T}$, $k = 0$ which is depicted in Fig. \ref{plant2}.
  Its CI-BRG $\mathcal{B}$ with respect to $T_E=\{t_3, t_4, t_6\}$ is shown in Fig. \ref{CPBRG1}.
  Now we execute Algorithm~1 to verify if $G$ is non-blocking.

  First, by solving ILPP~(\ref{EqILLP}) for all $M_b$ in $\mathcal{B}$, we conclude that $\hat{\mathcal{M_B}} = \{M_{b0}, M_{b1}, M_{b2}, M_{b4}, M_{b5}\}$.
  Then, by analyzing the CI-BRG $\mathcal{B}$, it can be inferred that $M_{b3}$ is not co-reachable to any of the basis marking in $\hat{\mathcal{M_B}}$; thus, the system is blocking.$\hfill\diamondsuit$
\end{example}

\section{Simulation Results}\label{Section5}

\subsection{On Efficiency}
We use the parameterized Petri net in Fig. \ref{FigBenchmark} (slightly modified version of the Petri net in Fig.~5 in \cite{cabasino2011discrete}) to test the efficiency of our approach.
All tests are carried out on a PC with Intel Core i7-7700 CPU 3.60 GHz processor and 8.00 GB RAM.
This system consists of $46$ places and $39$ transitions, where the initial marking $M_0$ is parameterized as:
  $M_0 = \alpha p_1+\beta p_{16} + p_{31}+p_{32}+p_{33}+p_{34}+p_{35}+p_{37}+p_{38}+p_{39}+8p_{40}+p_{41}.$
Let $\mathcal{F}=\mathcal{L}_{(\textbf{w},k)}=\{M\in\mathbb{N}^m| \textbf{w}^{T}\cdot M \leq k\}$, where
\begin{equation}\nonumber
    \begin{aligned}
        \textbf{w} = [0\ 1\ 0\ &0\ 0\ 0\ 0\ 1\ 0\ 0\ 0\ 0\ 0\ 0\ 0\ 0\ 1\ 1\ 0\ 0\ 0\ 0\ 1\ 0\ 1\ 0\\
          & 0\ 0\ 0\ 0\ 0\ 0\ 0\ 0\ 0\ 1\ 0\ 0\ 0\ 0\ 0\ 0\ 0\ 0\ 0\ 0]^{T}
    \end{aligned}
\end{equation}
and $k=4$ (for runs 1$\--$4) or $k = 7$ (for runs 5$\--$8) to test non-blockingness of this plant for all cases.
\begin{figure}[t]
  \centering
  \includegraphics[width=11cm]{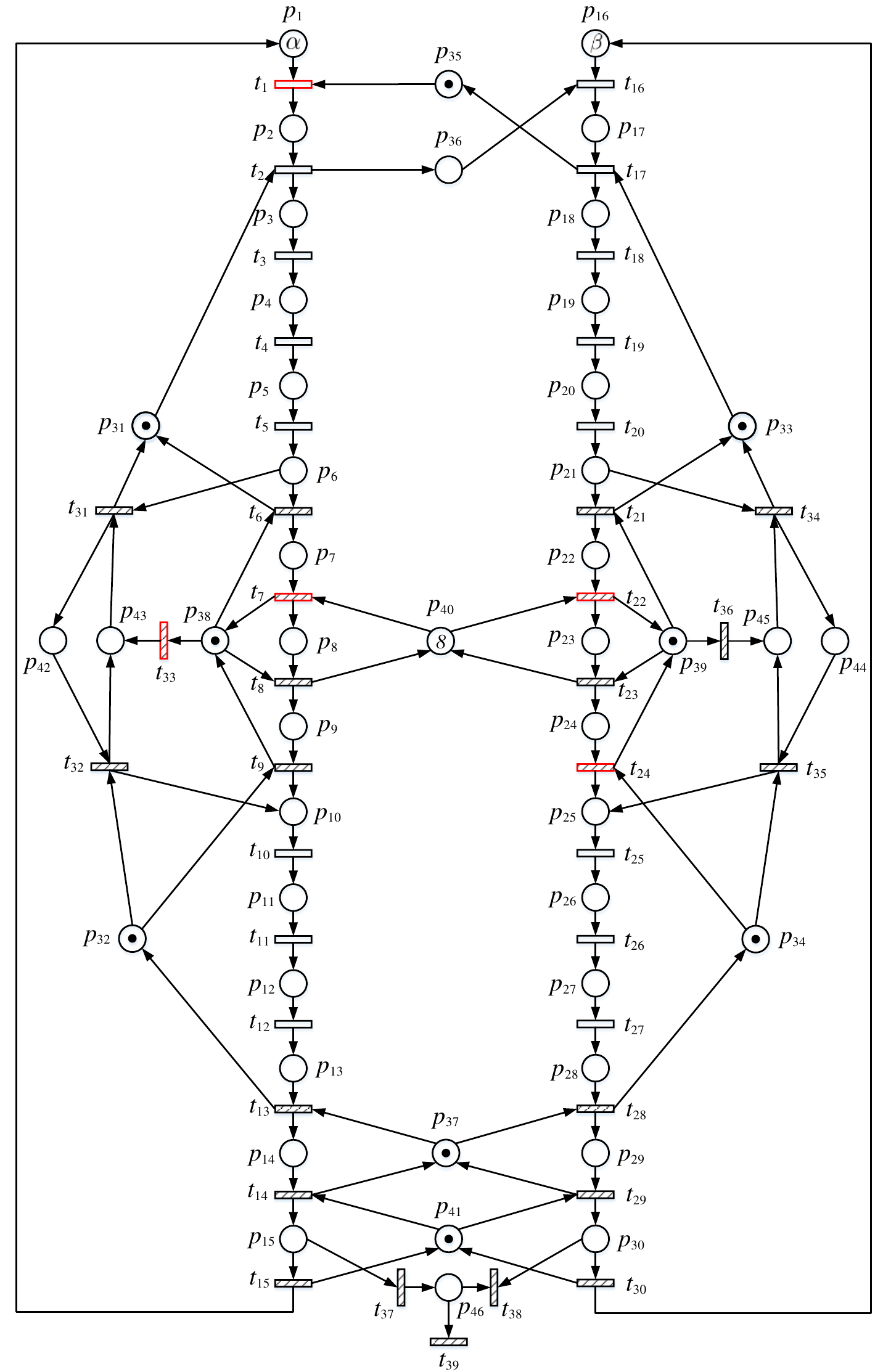}\\
  \caption{A parameterized plant $G = (N, M_0, \mathcal{F})$.}\label{FigBenchmark}
\end{figure}

Firstly, it can be inferred that $T_{conf} = \{t_6, t_7, t_8, t_9, t_{13}, t_{14}, t_{15}, t_{21},t_{22},t_{23},t_{24},t_{28},t_{29},t_{30},t_{31},t_{32},t_{33},
t_{34},\\
t_{35},t_{36},t_{37},t_{38},t_{39}\}$ (marked in shadow) and $T_{inc} = \{t_1, t_7, t_{22}, t_{24}, t_{35}\}$ (boxed in red). Since the subnet induced by all transitions $t\in T\setminus (T_{conf}\cup T_{inc})$ is acyclic, a non-conflicting and non-increasing $T_I$ can be obtained, i.e., $T_I = T\setminus (T_{conf}\cup T_{inc})$ and therefore $T_E = T\setminus T_I$. Thus, the corresponding CI-BRG $\mathcal{B}$ can be constructed based on $\pi = (T_E, T_I)$.
Moreover, for comparison, we compute the minimax-BRG \cite {gu2020verification} of the plant (denoted as $\mathcal{B_M}$) in Fig. \ref{FigBenchmark} by considering another basis partition $\pi^{\prime} = (T_E^{\prime}, T_I^{\prime})$ where $T_E^{\prime} = \{t_1, t_6, t_7, t_9, t_{14}, t_{21}, t_{23}, t_{24}, t_{29}, t_{32}, t_{34}, t_{35}\}$ and $T_I^{\prime} = T\setminus T_E^{\prime}$. The set of all minimax basis markings in $\mathcal{B_M}$ is denoted as $\mathcal{M_{B_M}}$.

In Table \ref{tablenew1}, for different values of $\alpha$ and $\beta$, the number of basis markings in $\mathcal{M_{B}}$ in the CI-BRG $\mathcal{B}$, minimax basis markings in $\mathcal{M_{B_M}}$ in the minimax-BRG $\mathcal{B_M}$, and all reachable markings $M\in R(N, M_0)$, as well as their computing times are listed in columns 1$\--$9.
Meanwhile, the non-blockingness of each case and the ratios of node number and time assumptions in terms of reachability graph and $\mathcal{B}$ are respectively reported in columns 10$\--$12.
Through the results, we conclude that the CI-BRG-based approach outperforms that of the RG-based method in this plant for all cases; whereas, with the expansion of the system scale, CI-BRG shows the potential to be more efficient than the minimax-BRG in this study.

\subsection{On Practice}
To echo the motivation of our research and illustrate the verification process of our approach in detail, in this subsection, we test a real-world example, which refers to a \textit{Hospital Emergency Service System} \cite{li2014robust} that is modelled by Petri nets. All tests are carried out on a PC with an Intel Core i7-7700 CPU 3.60 GHz processor and 8.00 GB RAM.

    This example demonstrates the medical service process \cite{sampath2008control} from patients’ arrival to departure after their treatment as modelled by a plant $G = (N, M_0, \mathcal{F})$ in Fig. \ref{RealWorldPN}.
    Physically, place $p_1$ represents the count of patients in the emergency department.
    The Petri net $N$ in $G$ consists of $22$ places and $22$ transitions.
    Consider the initial marking $M_0 = 4 p_1+ 4 p_{11} + 4  p_{18}+4 p_{19}.$
    \begin{figure}[]
  \begin{center}
\includegraphics[width=5.5cm]{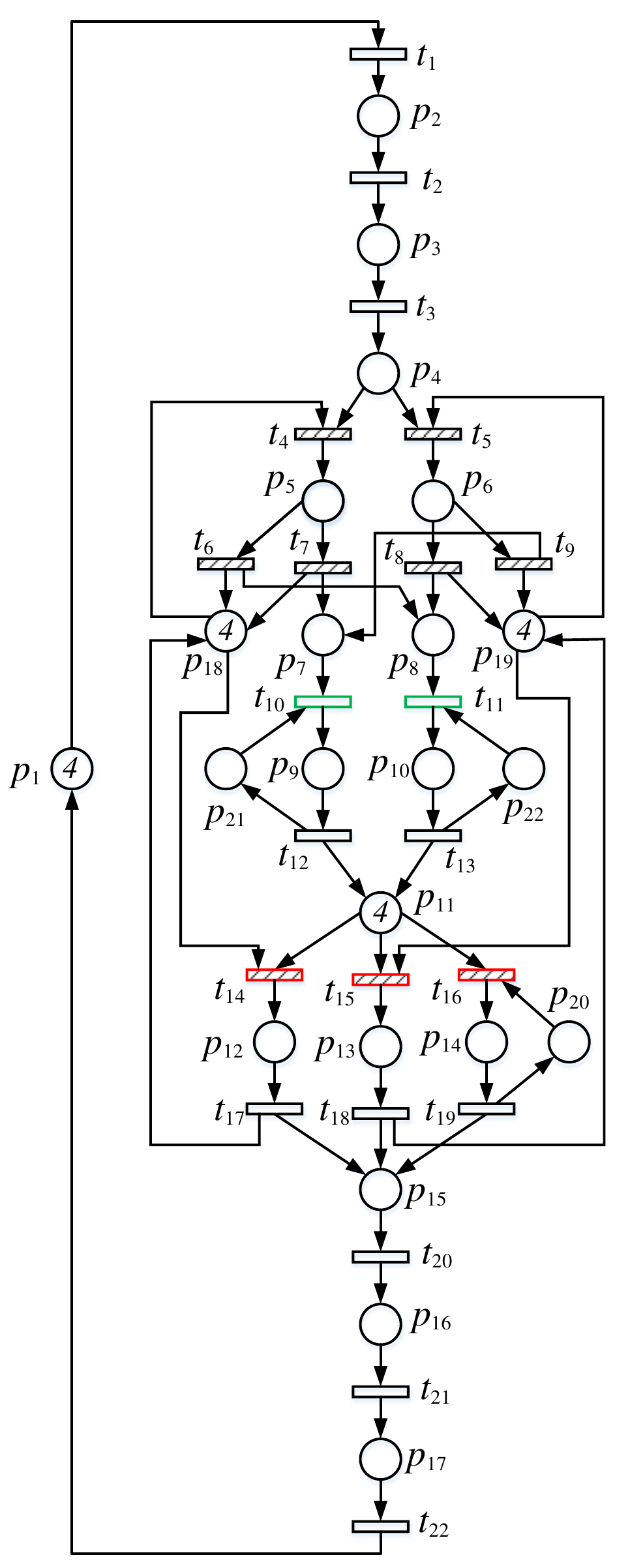}
\caption{The Hospital Emergency Service System modelled by a plant $G = (N, M_0, \mathcal{F})$.}\label{RealWorldPN}
\end{center}
\end{figure}
Let $\mathcal{F}=\mathcal{L}_{(\textbf{w},k)}=\{M\in\mathbb{N}^m| \textbf{w}^{T}\cdot M \leq k\}$, where
\begin{equation}\nonumber
    \begin{aligned}
        \textbf{w} = [1\ 1\ 1\ 1\ 1\ 1\ 1\ 1\ 1\ 1\ 0\ 1\ 1\ 1\ 1\ 1\ 1\ 0\ 0\ 0\ 1\ 1]^T,
    \end{aligned}
\end{equation}
$k= 6$ for Case I, and $k= 8$ for Case II.
Next, based on Algorithm 1, we use our approach (CI-BRG-based one) to test non-blockingness of this plant $G = (N, M_0, \mathcal{F})$ in two cases.

Since the construction of $\cal{B}$ is irrelevant to the parameter $k$, we first find a basis partition $\pi = (T_E,T_I)$ where $T_I$ is non-conflicting and non-increasing, and then construct the corresponding CI-BRG.
Based on plant $G$, it can be inferred that $T_{conf} = \{t_4, t_5, t_6, t_7, t_{8}, t_{9}, t_{14}, t_{15},t_{16}\}$ (marked in shadow) and $T_{inc} = \{t_{14}, t_{15}, t_{16}\}$ (boxed in red).
To ensure the acyclicity of the $T_I$-induced subnet, let transitions $t_{10}$ and $t_{11}$ be explicit (boxed in green).
Since the sub-net induced by all transitions $t\in T\setminus (T_{conf}\cup T_{inc}\cup\{t_{10}, t_{11}\})$ is acyclic, a non-conflicting and non-increasing $T_I$ can be obtained, i.e., $T_I = T\setminus (T_{conf}\cup T_{inc}\cup\{t_{10}, t_{11}\})$ and therefore $T_E = T\setminus T_I$. Thus, the corresponding CI-BRG $\mathcal{B}= (\mathcal{M_B}, {\rm Tr}, \Delta, M_0)$ can be constructed based on $\pi = (T_E, T_I)$.
The CI-BRG $\mathcal{B}$ contains 3863 nodes and can be constructed in 33 seconds (in contrast, the RG cannot be constructed within 36,000 seconds).
Here, we do not show the graphical representation of $\mathcal{B}$ due to its size.
Next, we illustrate the verification process for Cases I and II respectively as follows.
\begin{itemize}
                    \item Case I ($k= 6$):

                   \item [] For all $M_b\in \mathcal{M_B}$, we add $M_b$ into the set $\hat{\mathcal{M_B}}$ if $R_I(M_b)\cap{\cal F}\neq\emptyset$.
By solving ILPP (\ref{EqILLP}), the set $\hat{\mathcal{M_B}}$ can be obtained and $|\hat{\mathcal{M_B}}| = 818$.

Next, we proceed to stage (ii). According to computation, it is concluded that there exists basis marking in $\cal{B}$ that is not co-reachable to any  basis marking in $\hat{\mathcal{M_B}}$; thus, the plant $G$ is inferred to be blocking.

                    \item Case II ($k= 8$):

                    \item [] Similarly, by solving ILPP (\ref{EqILLP}), the set $\hat{\mathcal{M_B}}$ can be obtained and $|\hat{\mathcal{M_B}}| = 3863$.

Next, we proceed to stage (ii). It is concluded that any basis marking in $\cal{B}$ is co-reachable to at least a basis marking in $\hat{\mathcal{M_B}}$; thus, the plant $G$ is non-blocking.
                  \end{itemize}

\section{Conclusion}
We have developed a novel method for non-blockingness verification in Petri nets.
By adopting a basis partition with transition set $T_I$ being non-conflicting and non-increasing, we have proposed a particular type of BRGs called the CI-BRGs.
Based on CI-BRGs, we have proposed a necessary and sufficient condition for non-blocking verification proved that a net is non-blockingness.
Our method can be applied to both deadlock-free nets and non-deadlock-free ones.
Simulation shows that the approach we have proposed achieves practical efficiency.
In the future, we plan to tackle the non-blockingness enforcement problem by applying BRGs.

\bibliographystyle{plain}        
\bibliography{sample_2021cdc}

\end{document}